\renewcommand{\title}[1]{

\begin{center} \Large \bf #1 \end{center}
}
\renewcommand{\author}[2]{
 \begin{center} #1  \vspace{3mm} \\
  #2 \\
 \end{center}
\addvspace{\baselineskip}
}
\newtheorem{theorem}{Theorem}[section]
\newtheorem{proposition}[theorem]{Proposition}
\theoremstyle{definition}
\theoremstyle{remark}
\def\propagator{\includegraphics[width=0.1\textwidth]{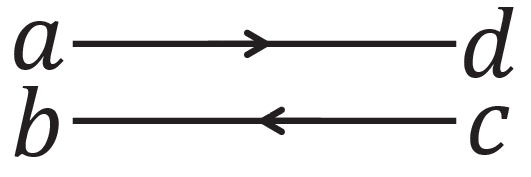}}
\def\vertexs{\includegraphics[width=0.08\textwidth]{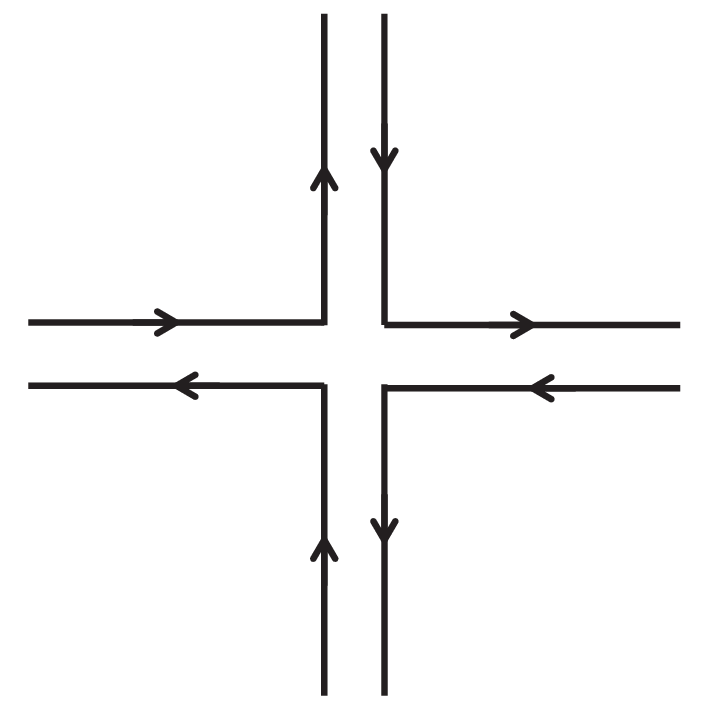}}
\def\8{\includegraphics[width=1.0\textwidth]{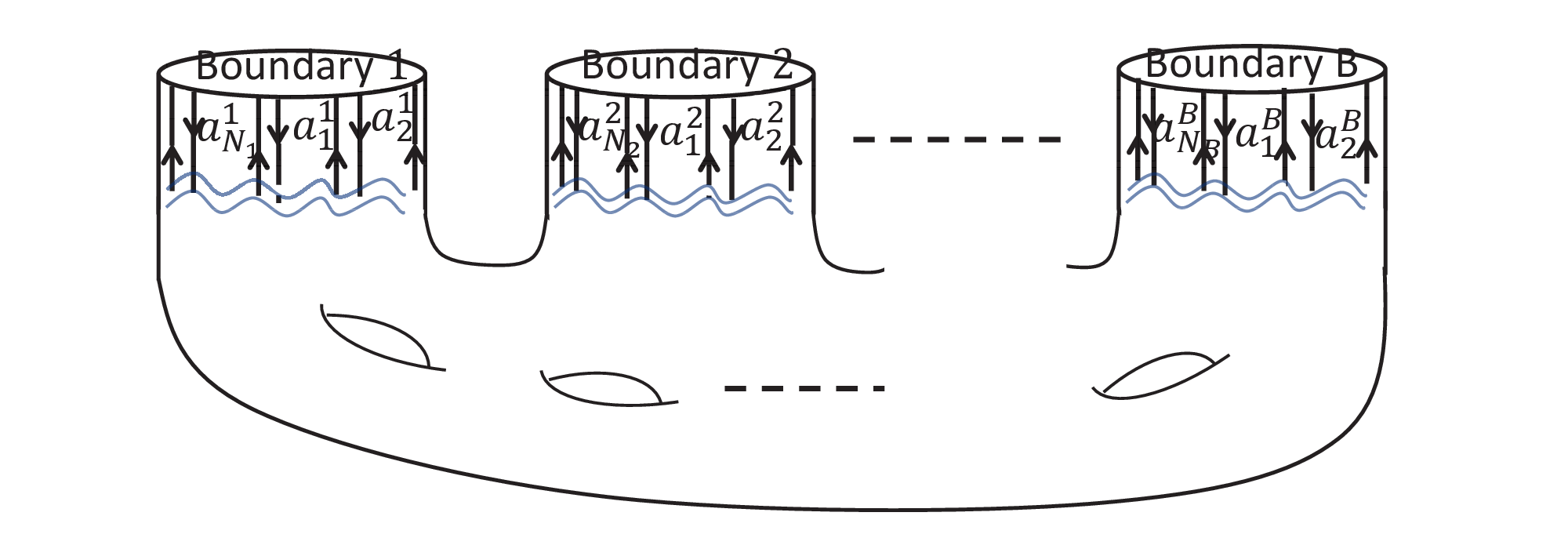}}
\begin{document}
\baselineskip 5mm
\title{Relationship between a $\Phi^4$ matrix model and harmonic oscillator systems}
\author{${}^1$Harald Grosse,  ${}^2$Naoyuki Kanomata, ${}^2$Akifumi Sako, ${}^3$Raimar Wulkenhaar
}
{${}^{1}$
Erwin Schr\"odinger International Institute for Mathematics and Physics, \\
University of Vienna, Boltzmanngasse 9, 1090 Vienna, Austria \vspace{3mm}\\

${}^1$
Faculty of Physics, University of Vienna, Boltzmanngasse 5, 
1090 Vienna, Austria
\vspace{3mm}\\

${}^2$
Tokyo University of Science, 1-3 Kagurazaka, Shinjuku-ku, Tokyo, 162-8601, Japan
\vspace{3mm}\\

${}^3$
Mathematisches Institut, Universit\"at M\"unster,
Einsteinstra{\ss}e 62, D-48149 M\"unster, Germany
}
\noindent
\vspace{1cm}

\abstract{A Hermitian $\Phi^4$ matrix model with a Kontsevich-type
  kinetic term is studied.  It was recently discovered that the
  partition function of this matrix model satisfies the
  Schr\"odinger equation of the $N$-body harmonic oscillator, and that
  eigenstates of the Virasoro operators can be derived from this
  partition function.  We extend these results and obtain an explicit
  formula for such eigenstates in terms of the free
  energy. Furthermore, the Schr\"odinger equation for the $N$-body
  harmonic oscillator can also be reformulated in terms of connected
  correlation functions. The $U(1)^N$-symmetry allows us to
 derive loop equations.  }
%
%
%

\allowdisplaybreaks
\section{Introduction}\label{sect1}

In the 1990s, numerous connections between matrix models and
two-dimensional quantum gravity were discovered, and many important
developments have been made. We refer to \cite{DiFrancesco:1993cyw}
for an early review that covers most of these achievements.
Of particular importance is the Kontsevich model
\cite{Kontsevich:1992ti}. 
Its action is given by
$ S_{K} = N~ Tr \{ E \Phi^2 + \frac{\lambda}{3} \Phi^3 \} , $ where
$\Phi$ is an $N\times N$ Hermitian matrix, $E$ is a positive diagonal
$N\times N$ matrix $E := diag (E_1, E_2 , \cdots ,E_N )$ without
degenerate eigenvalues, and $\lambda$ is a complex number as a
coupling constant.  This model was proposed to prove the Witten
conjecture \cite{Witten:1990hr}.  The model we will study in this paper
is given by
replacing the interaction term $\frac{1}{3} \Phi^3$ by
$\frac{1}{4} \Phi^4$.
The motivation to consider such kind of matrix models 
comes from quantum field theories on noncommutative spaces.

A noncommutative space is studied in many ways.  For example, we
consider a noncommutative function algebra as a noncommutative space
obtained by deforming a commutative function algebra into a
noncommutative one.  In the process, we replace the commutative
product with a noncommutative product.  Then, by using an appropriate
matrix representation to the noncommutative algebra, the field theory
can be described as a matrix model.

Most of quantum field theories on noncommutative spaces are not
renormalizable because of the UV/IR problem.  However, the scalar
$\Phi^4$ field theory on the Moyal space proposed by Grosse and Wulkenhaar 
\cite{Grosse:2004yu} and the scalar $\Phi^3$ theory deceloped 
by Grosse and
Steinacker \cite{Grosse:2005ig,Grosse:2006tc} appeared as 
exceptions that could be renormalized.  In other words, they showed
that those field theories to which certain counter-Lagrangian terms
are added are renormalizable.  This scalar $\Phi^3$ theory is
basically equivalent to the Kontsevich model.  This scalar $\Phi^4$
quantum field theory on the Moyal space (Grosse-Wulkenhaar model) is the
model we discuss in this paper.  (It is also worth noting here that
alternative formulations of renormalizable quantum field theory on noncommutative
spaces, different from the Grosse-Wulkenhaar or Grosse-Steinacker
types, have also been discussed in recent years \cite{Nguyen:2021rsa}.)

The $\Phi^3$ theory (the Kontsevich model) has been known from the
outset to correspond to the KdV hierarchy, and it has been expected
that a model in which the interactions are simply replaced by $\Phi^4$
would also be related to integrable systems.  This is because, if the
Feynman diagrams of the $\Phi^3$ matrix model are mapped onto
triangulations of a surface, it is natural to think that the $\Phi^4$
matrix model would simply correspond to quadrangulation, with no
essential difference.  Actually, it was found that the partition
functions of the Hermitian $\Phi^4$-matrix model correspond to
zero-energy solutions of a Schr\"odinger-type equation with 
an $N$-body
harmonic oscillator Hamiltonian.  Furthermore, the partition functions
of the real symmetric $\Phi^4$-matrix model corresponds to zero-energy
solutions of a Schr\"odinger type equation with the Calogero-Moser
Hamiltonian \cite{Grosse:2023jcb,Grosse:2023ncx,Grosse:2024bfh}.

\medskip

In this paper, we extend the result that the partition function
satisfies the Schr\"odinger equation for the $N$-body harmonic
oscillator system, and obtain an explicit formula for such eigenstates
in terms of the free energy. Furthermore, since the free energy serves
as the generating function for connected multi-point correlation
functions, the differential equation for the harmonic oscillator can
also be reformulated in terms of these connected correlation
functions.  The corresponding equations for the connected two- and
four-point functions are derived.  These results are further confirmed
perturbatively up to first order in the coupling constant of the
interaction.  The process of obtaining the Schr\"odinger equation for
the $N$-body harmonic oscillator is constructed from a set of
Schwinger-Dyson equations.  The contribution of additional
Schwinger-Dyson equations is often discussed using loop equations in
Hermitian matrix models with $U(N)$ symmetry.  Although this model
lacks $U(N)$ symmetry due to the presence of a kinetic term, it
retains $U(1)^N$ symmetry, enabling us to derive equations similar to
loop equations, as described in \cite{Eynard:2015aea}.

\section{Setup of $\Phi^{4}$ Matrix Model and preparations}\label{sec2}

In this section, we review the $\Phi^{4}$ matrix model based on previous studies \cite{Grosse:2023jcb,Grosse:2023ncx,Grosse:2024bfh}, and we provide the notation in this paper.

Let $\Phi=(\Phi_{ij})$ be a Hermitian matrix for $i,j=1,2,\ldots,N$ and $E$ be a real
 diagonal $N\times N$ matrix 
$E := diag (E_1, E_2 , \cdots ,E_N )$ without
degenerate eigenvalues, i.e. $E_i \neq E_j $ if $i \neq j$.
Let us consider the following action:
\begin{align}
S[\Phi]=N\mathrm{tr}\left(E\Phi^{2}+\frac{\eta}{4}\Phi^{4}\right),
\end{align}
where $\eta$ is a coupling constant that is a positive real number. Since the diagonal matrix $E$ is not proportional to the unit matrix in general, there is no symmetry for the unitary transformation in $\Phi\rightarrow U\Phi U^{\dagger}$. Here $U$ is a unitary matrix, and $U^{\dagger}$ is its Hermitian conjugate. 

Let $\mathcal{D}\Phi$ be the ordinary Haar measure,
\begin{align}
\displaystyle\mathcal{D}\Phi:=&\prod_{i=1}^{N}d\Phi_{ii}\prod_{1\leq i<j\leq N}d\mathrm{Re}\Phi_{ij}d\mathrm{Im}\Phi_{ij},
\end{align}   
where each variable is divided into real and imaginary parts $\Phi_{ij}=\mathrm{Re}\Phi_{ij}+i\mathrm{Im}\Phi_{ij}$ with $\mathrm{Re}\Phi_{ij}=\mathrm{Re}\Phi_{ji}$ and $\mathrm{Im}\Phi_{ij}=-\mathrm{Im}\Phi_{ji}$. Let us consider the following partition function:
\begin{align}
{Z}(E, \eta):=&\int_{H_N} \mathcal{D}\Phi\exp\left(-S[\Phi]\right), \label{partitionfunction}
\end{align}
where $H_N$ is the space of $N\times N$ Hermitian matrices.

Let $\Delta(E)$ be the Vandermonde  determinant 
$\Delta (E) := \prod_{k<l} (E_l -E_k)$.
Then the function
\begin{align*}
\Psi (E, \eta ) := 
e^{-\frac{N}{2 \eta}\sum_{i=1}^{N}E_{i}^{2}}\Delta(E)
{Z}(E,\eta)
\end{align*}
is a zero-energy solution of the Schr\"odinger type equation
for the 
$N$-body harmonic oscillator system.

\begin{theorem}\label{main_thm}\cite{Grosse:2023jcb}
Let $\Psi (E, \eta )$ be the function defined above.
Then $\Psi (E, \eta )$ is a zero-energy solution of 
the Schr\"odinger type equation
\begin{align*}
{\mathcal H}_{HO} \Psi (E, \eta ) = 0. 
\end{align*}
Here ${\mathcal H}_{HO}$ is the Hamiltonian for the 
$N$-body harmonic oscillator system
\begin{align}\label{N_harmonic_H}
{\mathcal H}_{HO}:= - \frac{\eta}{N} \sum_{i=1}^N 
\left( 
\frac{\partial}{\partial E_i}
\right)^2  + \frac{N}{\eta}\sum_{i=1}^N (E_i)^2 .
\end{align}
\end{theorem}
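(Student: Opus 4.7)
The strategy is to diagonalize the matrix integral, integrate out the unitary via the Harish-Chandra-Itzykson-Zuber (HCIZ) formula, and reduce $\mathcal H_{HO}\Psi=0$ to an Euler-homogeneity identity for the resulting eigenvalue integrand. Writing $\Phi = U\Lambda U^\dagger$ with $\Lambda = \mathrm{diag}(\lambda_1,\dots,\lambda_N)$, one has $\mathrm{tr}(E\Phi^2)=\mathrm{tr}(EU\Lambda^2 U^\dagger)$, and since both $E$ and $\Lambda^2$ are diagonal, HCIZ evaluates the $U$-integral as $c_N\det(e^{-NE_i\lambda_j^2})/(\Delta(E)\Delta(\lambda^2))$. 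Combining this with the Haar Jacobian $\Delta(\lambda)^2$ and completing the square inside each exponential, one obtains
\[
\Psi(E,\eta) \;=\; C_N\int_{\mathbb{R}^N}d^N\lambda\;K(\lambda)\det(h_{ij}),
\]
where $K(\lambda):=\frac{\Delta(\lambda)^2}{\Delta(\lambda^2)}\,e^{\frac{N\eta}{4}\sum_j\lambda_j^4}$, $h_{ij}:=e^{-\frac{N}{2\eta}(E_i+\eta\lambda_j^2)^2}$, and $C_N$ is an $E$-independent normalization. The apparent poles of $\Delta(\lambda)^2/\Delta(\lambda^2)$ at $\lambda_i=-\lambda_j$ are matched by the zeros of $\det(h_{ij})$ wherever $\lambda_i^2=\lambda_j^2$, so the integrand is regular.

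Next I apply $\mathcal H_{HO}$ under the integral; only $h_{ij}$ depends on $E$. Direct computation yields $[-\tfrac{\eta}{N}\partial_{E_i}^2+\tfrac{N}{\eta}E_i^2]h_{ij}=(1-2NE_i\lambda_j^2-N\eta\lambda_j^4)h_{ij}$, so expanding $\det(h)=\sum_\sigma\mathrm{sgn}(\sigma)\prod_i h_{i\sigma(i)}$ gives
\[
\mathcal H_{HO}\det(h) \;=\; \Bigl[N - N\eta\sum_j\lambda_j^4\Bigr]\det(h)\;-\;2N\sum_\sigma\mathrm{sgn}(\sigma)\Bigl(\sum_i E_i\lambda_{\sigma(i)}^2\Bigr)\prod_i h_{i\sigma(i)}.
\]
The asymmetric last piece is then recognised as a total $\lambda$-derivative: because $\sigma$ is a bijection, $\lambda_{\sigma(i)}$ appears only in the single factor $h_{i\sigma(i)}$, and $\lambda_j\partial_{\lambda_j}h_{ij}=-2N\lambda_j^2(E_i+\eta\lambda_j^2)h_{ij}$ yields the local identity $-2NE_i\lambda_{\sigma(i)}^2 h_{i\sigma(i)} = \lambda_{\sigma(i)}\partial_{\lambda_{\sigma(i)}}h_{i\sigma(i)} + 2N\eta\lambda_{\sigma(i)}^4 h_{i\sigma(i)}$. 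Summing over $i,\sigma$ and relabelling $j=\sigma(i)$ in the quartic sum produces
\[
\mathcal H_{HO}\det(h) \;=\; N\det(h) + N\eta\Bigl(\sum_j\lambda_j^4\Bigr)\det(h) + \sum_j\lambda_j\partial_{\lambda_j}\det(h).
\]

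Finally I integrate by parts in each $\lambda_j$; boundary terms vanish because the $h_{ij}$ supply $e^{-\frac{N\eta}{2}\sum_j\lambda_j^4}$ decay that overwhelms the $K$-factor. This transfers $\sum_j\lambda_j\partial_{\lambda_j}$ onto $K(\lambda)$, giving $-\sum_j\partial_{\lambda_j}(\lambda_jK)=-NK-\sum_j\lambda_j\partial_{\lambda_j}K$. The decisive homogeneity fact is that $\Delta(\lambda)^2/\Delta(\lambda^2)$ has degree $2\binom{N}{2}-2\binom{N}{2}=0$ in $\lambda$, so Euler's identity gives $\sum_j\lambda_j\partial_{\lambda_j}[\Delta(\lambda)^2/\Delta(\lambda^2)]=0$; only the quartic exponential contributes $\sum_j\lambda_j\partial_{\lambda_j}K=N\eta(\sum_j\lambda_j^4)K$. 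Plugging back, the $N\cdot\Psi$ term and the $N\eta(\sum\lambda^4)$ term each cancel against their IBP counterparts, yielding $\mathcal H_{HO}\Psi=0$. The main technical obstacle is the asymmetric-piece identity that converts the $\sigma$-dependent sum $\sum_i E_i\lambda_{\sigma(i)}^2$ into a $\lambda$-derivative of $\det(h)$ — it relies crucially on $\sigma$ being a bijection so that each $\lambda_j$ occurs in exactly one factor of $\prod_i h_{i\sigma(i)}$ — together with the degree-zero homogeneity of $\Delta(\lambda)^2/\Delta(\lambda^2)$, which is immediate by counting degrees in the two Vandermondes.
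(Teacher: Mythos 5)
Your proof is correct, but it follows a genuinely different route from the one given in Section \ref{sec2} of the paper. There, the theorem is obtained by deriving the Schwinger--Dyson identity (\ref{eq8}) from $\int \mathcal{D}\Phi\,\partial_{\Phi_{ij}}(\Phi_{ij}e^{-S})=0$, expressing the resulting moments as first and second derivatives of $Z$ with respect to the entries of $H$, converting those into $E$-derivatives via the radial-Laplacian formulae (\ref{H-E_formula2})--(\ref{H-E_laplace}) to get $\mathcal{L}_{SD}Z=0$, and finally conjugating $\mathcal{L}_{SD}$ by $g=\Delta^{-1}(E)e^{\frac{N}{2\eta}\sum_iE_i^2}$ to exhibit $-\mathcal{H}_{HO}$. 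You instead integrate out the angular variables with Harish-Chandra--Itzykson--Zuber, complete the square to obtain the explicit kernel $h_{ij}=e^{-\frac{N}{2\eta}(E_i+\eta\lambda_j^2)^2}$, and verify $\mathcal{H}_{HO}\Psi=0$ directly by differentiating under the integral and integrating by parts in the eigenvalues; I checked the local identities $[-\tfrac{\eta}{N}\partial_{E_i}^2+\tfrac{N}{\eta}E_i^2]h_{ij}=(1-2NE_i\lambda_j^2-N\eta\lambda_j^4)h_{ij}$ and $\lambda_j\partial_{\lambda_j}h_{ij}=-2N\lambda_j^2(E_i+\eta\lambda_j^2)h_{ij}$, the bookkeeping of the $\sum_j\lambda_j^4$ terms, and the degree-zero homogeneity of $\Delta(\lambda)^2/\Delta(\lambda^2)$, and everything closes. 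What your approach buys is an explicit determinantal integral representation of $\Psi$ that makes the oscillator structure manifest (each $h_{ij}$ is a displaced Gaussian); what the paper's approach buys is the operator $\mathcal{L}_{SD}$ itself, which is reused throughout Sections \ref{sect3}--\ref{sect5} for the Virasoro eigenstates, the free-energy reformulation and the loop equations, so it is not merely a proof device. One technical point you should make explicit: when you integrate by parts, the individual pieces $(\partial_{\lambda_j}K)\det(h)$ have simple poles on the hyperplanes $\lambda_k+\lambda_l=0$; the manipulation is saved by keeping the full Euler operator $\sum_j\lambda_j\partial_{\lambda_j}$ intact, since the Euler vector field is tangent to these cones and therefore preserves the vanishing of $\det(h)$ there, so both sides of your integration-by-parts identity are integrals of regular, rapidly decaying functions differing by a total divergence.
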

It is also known that by replacing the Hermitian matrix $\Phi$ with a real symmetric matrix, 
the Hamiltonian of the above harmonic oscillator is replaced with the Calogero model Hamiltonian \cite{Grosse:2023ncx}.

\medskip

With the aim of naturally introducing the formulae we will use later,
we give in this section a rough outline of the proof of this theorem.
Details are given in \cite{Grosse:2023jcb}. 
To derive the above differential equation,
we introduce $H$ as a positive  Hermitian 
$N\times N$ matrix
with nondegenerate eigenvalues
$\{E_1, E_2 , \cdots ,E_N ~ | ~ E_i \neq E_j ~\mbox{for}~ i \neq j \}$.
Using this $H$, we consider the new action
\begin{align}
S&= N~ Tr \{ H \Phi^2 + \frac{\eta}{4} \Phi^4  \}
\notag \\
&= N \left( 
\sum_{i,j,k}^N  H_{ij}\Phi_{jk}\Phi_{ki}
+ \frac{\eta}{4} \sum_{i,j,k,l}^N
\Phi_{ij}\Phi_{jk}\Phi_{kl}\Phi_{li}
\right).
\label{action_S}
\end{align}
The partition function defined by this $S$ 
\begin{align}
Z(E, \eta) := \int_{H_N} \mathcal{D} \Phi ~e^{-S} , \label{partitionfunction_2}
\end{align}
is the same one defined by (\ref{partitionfunction}),
because the integral measure is $U(N)$ invariant.
We use the symbol $\displaystyle \langle O \rangle$
as a non-normalized vacuum expectation value
defined by
$\displaystyle \langle O \rangle := \int_{H_N} \mathcal{D} \Phi ~ O e^{-S} $.

The Schwinger-Dyson equation is derived from 
\begin{align}
\int_{H_N} \mathcal{D} \Phi ~  \frac{\partial}{\partial \Phi_{ij}}
\left(
\Phi_{ij} e^{-S}
\right) = 0,
\end{align}
which is expressed as
\begin{align}
Z(E, \eta) - N \sum_k (\langle
H_{ki}\Phi_{ij}\Phi_{jk}
\rangle 
+\langle
H_{jk}\Phi_{ki}\Phi_{ij}
\rangle 
)
-N\eta 
 \sum_{k,l} \langle
\Phi_{jk}\Phi_{kl}\Phi_{li}\Phi_{ij}
\rangle =0 . \label{eq8}
\end{align}
To obtain the desired partial differential equation, we use the fact that the following expectation values can be expressed in terms of partial derivatives:
\begin{align}
\frac{\partial Z(E, \eta) }{\partial H_{ij}} =
-N \sum_k \langle  \Phi_{jk}\Phi_{ki} \rangle ,
\quad
 \frac{\partial^2 Z(E, \eta) }{\partial H_{ij}\partial H_{mn}} =
N^2  \sum_{k,l} \langle \Phi_{jk}\Phi_{ki}
\Phi_{nl}\Phi_{lm} \rangle .  \label{eq9}
\end{align}
After summing (\ref{eq8}) over
indices $i$ and $j$ 
and substituting (\ref{eq9}) for it,
\begin{align}
{\mathcal L}_{SD}^H Z(E, \eta) = 0 . \label{SD_H}
\end{align}
is obtained, where
${\mathcal L}_{SD}^H $ is a second order differential operator
defined by
\begin{align}
{\mathcal L}_{SD}^H:=
N^2 + 2 \sum_{i,k} H_{ki} \frac{\partial  }{\partial H_{ki}}
-\frac{\eta}{N} \sum_{i,k}
\left( 
\frac{\partial  }{\partial H_{ki}}\frac{\partial  }{\partial H_{ik}}
\right) .
\end{align}

To rewrite this Schwinger-Dyson equation in terms of  
$E_n (n= 1,2, \cdots , N)$, the following formulae for
the second term,
\begin{align}
\sum_{i,j} H_{ij} \frac{\partial  Z(E, \eta)}{\partial H_{ij}}
= \sum_k E_k  \frac{\partial  Z(E, \eta)}{\partial E_{k}} ,
\label{H-E_formula2}
\end{align}
and the third term,
\begin{align}
\sum_{i,k}
\left( 
\frac{\partial  }{\partial H_{ki}}\frac{\partial  }{\partial H_{ik}}
\right)~Z(E, \eta)
&=
\left\{
\sum_{i=1}^N \left( \frac{\partial}{\partial E_i} \right)^2
+\sum_{i \neq j} \frac{1}{E_i - E_j}
\left( \frac{\partial}{\partial E_i} - 
\frac{\partial}{\partial E_j} \right)
\right\}~ Z(E, \eta)   \label{H-E_laplace}
\end{align}
are used.
Here $\displaystyle \sum_{i \neq j}$ means 
$\displaystyle \sum_{i,j=1 , i \neq j}^N$.
From (\ref{SD_H}) ,  (\ref{H-E_formula2}) , and (\ref{H-E_laplace}),
we find that
the partition function defined by (\ref{partitionfunction})
satisfies 
\begin{align}
{\mathcal L}_{SD} Z(E, \eta) = 0 ,
\label{SD_2}
\end{align}
where 
\begin{align}
{\mathcal L}_{SD} :=
\left\{
\frac{\eta}{N} \sum_{i=1}^N \left( \frac{\partial}{\partial E_i} \right)^2
+
\frac{\eta}{N} \sum_{i \neq j} \frac{1}{E_i - E_j}
\left( \frac{\partial}{\partial E_i} - 
\frac{\partial}{\partial E_j} \right)
-2 \sum_k E_k  \frac{\partial }{\partial E_{k}} -N^2
\right\}~. \label{LSD}
\end{align}

We can diagonalize ${\mathcal L}_{SD}$
by some kind of gauge transformation as
\begin{align}\label{prop3_1}
-e^{-\frac{N}{2\eta} \sum_i E_i^2} \Delta(E)
{\mathcal L}_{SD} 
\Delta^{-1}(E) e^{\frac{N}{2\eta} \sum_i E_i^2} 
= -\frac{\eta}{N} \sum_{i=1}^N 
\left( 
\frac{\partial}{\partial E_i}
\right)^2  + \frac{N}{\eta}\sum_{i=1}^N (E_i)^2 .
\end{align}
The right-hand side is the Hamiltonian of the $N$-body harmonic oscillator
system and we denote it 
by ${\mathcal H}_{HO}$
as (\ref{N_harmonic_H}).
%
To cancel the gauge transformation $\Delta^{-1}(E) e^{\frac{N}{2\eta} \sum_i E_i^2} =:g$,
we introduce a transformed partition function $\Psi (E, \eta )$ by
\begin{align}\label{Psi_def_3_8}
\Psi (E, \eta ) := 
e^{-\frac{N}{2\eta} \sum_i E_i^2} \Delta(E) Z(E, \eta ) = g^{-1}  Z(E, \eta ) .
\end{align}
From (\ref{prop3_1}), we find that
$\Psi (E, \eta ) $
is a zero-energy solution of the 
Schr\"odinger-type differential equation:
\begin{align}
{\mathcal H}_{HO} \Psi(E, \eta ) = 0.
\label{Schrodinger}
\end{align}
Theorem \ref{main_thm} 
is thus proved.
This $N$-body harmonic oscillator system has
no interaction terms between the oscillators, so
it is a trivial quantum integrable system.

\bigskip

We have one remark here.
It is known that there are no zero eigenvalue solutions of the $N$-body harmonic oscillator system
in $L^2 (\mathbb{R} ^N) $, so the solution obtained here is not such a function \cite{Grosse:2024bfh}.

\bigskip

We will also use the partition function and the free energy with external fields, so let us introduce them here.
Let $J=(J_{mn})$ be a Hermitian matrix for $m,n=1,\ldots,N$ as an external field. 
Let us consider the following partition function and the free energy
with this $J$:
\begin{align}
\mathcal{Z}[E, J]:=&\int\mathcal{D}\Phi\exp\left(-S[\Phi]+N\mathrm{tr}(J\Phi)\right)\notag\\
=&\int \mathcal{D}\Phi \exp\left(-N\mathrm{tr}\left(E\Phi^2+\frac{\eta}{4}\Phi^{4}\right)\right)\exp\left(N\mathrm{tr}\left(J\Phi\right)\right).\label{Z[J]} \\
F[E , J] :=& \log \mathcal{Z}[E, J] 
\end{align}
Note that $\mathcal{Z}[E, 0] = Z(E, \eta)$.\\

\bigskip

We saw that the partition function corresponds to 
the solution of the Schr\"odinger equation for the $N$-body harmonic oscillator.
It is known that the harmonic oscillator system leads to a representation of the Virasoro (Witt) algebra.
We see that this yields an infinite sequence of partial differential equations to be satisfied by the partition function.\\

For simplicity, we use variables $y_i := \sqrt{\frac{N}{\eta}}E_i$. The Hamiltonian (\ref{N_harmonic_H}) in this coordinate is written as
\begin{align*}
\mathcal{H}_{{HO}} &= 
\sum_{i=1}^{N} \left( -\frac{\partial^2}{\partial y_i^2} 
+ y_i^2 \right) =
 \sum_{i=1}^{N} \{ a_i ~, ~ a_i^\dagger \}, 
\end{align*}
where \begin{align*} a_i &= \frac{1}{\sqrt{2}} \left( y_i + \frac{\partial}{\partial y_i} \right), \quad a_i^\dagger = \frac{1}{\sqrt{2}} \left( y_i - \frac{\partial}{\partial y_i} \right) .
\end{align*}

We introduce the Virasoro generators with a free parameter $\alpha$:
\begin{align}
L_{-n} &= \sum_{i=1}^{N} 
\left( \alpha \left( a_i^\dagger \right)^{n+1} a_i 
+ (1-\alpha) a_i \left( a_i^\dagger \right)^{n+1} \right) ,
\ \ 
(n \ge -1 ) \label{L_n}
\end{align}
which satisfy the following commutation relations:
\begin{align*}
\left[ L_n, L_m \right] &= (n-m) L_{n+m} .
\end{align*}
In particular, $L_0$ is written by using $\mathcal{H}_{{HO}} $:
\begin{align*}
L_0 
&= \frac{1}{2} \mathcal{H}_{{HO}} + \left(\frac{1}{2}-\alpha \right) N  .
\end{align*}
Then, we find that $ \mathcal{H}_{{HO}} $ satisfies
\begin{align*}
\left[ \frac{1}{2} \mathcal{H}_{{HO}} , L_{-m} \right] &= m L_{-m} . 
\end{align*}
Using $\displaystyle g = 
\Delta^{-1}(E) e^{\frac{N}{2\eta} \sum_i E_i^2}  
=\left( \frac{N}{\eta} \right)^\frac{N(N-1)}{4} \!\!\!\! \Delta^{-1}(y) e^{\frac{1}{2} \sum_i y_i^2}  $, we define $\tilde{L}_{n}$ by
$\tilde{L}_{n}:= g L_n g^{-1}$ {satisfying}
$ 
\left[ \tilde{L}_n, \tilde{L}_m \right] = (n-m) \tilde{L}_{n+m} $.
%
%
Using this  $\tilde{L}_{n}$, we immediately obtain the following result,
\begin{align*}
\left[ \mathcal{L}_{{SD}}, \tilde{L}_{-m} \right] &= -2 \left[ \tilde{L}_0, \tilde{L}_{-m} \right] = -2m \tilde{L}_{-m} ,
\end{align*}
 which implies the following theorem.
\begin{theorem}\label{main2}
The partition function defined by (\ref{partitionfunction}) satisfies
\begin{align}
\mathcal{L}_{SD}(\widetilde{L}_{-m}{Z}(E,\eta))=&-2m(\widetilde{L}_{-m}{Z}(E,\eta)) \quad (m \ge -1 ).
\end{align}
\end{theorem}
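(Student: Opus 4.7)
The plan is to exploit two facts that are essentially already in place just before the statement: the Schwinger--Dyson identity $\mathcal{L}_{SD}\, Z(E,\eta) = 0$ from (\ref{SD_2}), and the operator identity $[\mathcal{L}_{SD}, \widetilde{L}_{-m}] = -2m\,\widetilde{L}_{-m}$ displayed immediately above the theorem. Once both are in hand, applying the commutator to $Z$ gives
\begin{align*}
\mathcal{L}_{SD}\big(\widetilde{L}_{-m} Z(E,\eta)\big)
&= \widetilde{L}_{-m}\big(\mathcal{L}_{SD} Z(E,\eta)\big) + [\mathcal{L}_{SD}, \widetilde{L}_{-m}] Z(E,\eta) \\
&= 0 - 2m\,\widetilde{L}_{-m} Z(E,\eta),
\end{align*}
which is exactly the claim. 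So the entire content lives in verifying the commutator.

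For the commutator, my plan is to trade $\mathcal{L}_{SD}$ for $\widetilde{L}_0$ using the gauge conjugation (\ref{prop3_1}). Writing $g = \Delta^{-1}(E)\,e^{\frac{N}{2\eta}\sum_i E_i^2}$, equation (\ref{prop3_1}) reads $-g^{-1}\mathcal{L}_{SD}\, g = \mathcal{H}_{HO}$, i.e.\ $\mathcal{L}_{SD} = -g\,\mathcal{H}_{HO}\,g^{-1}$. Combining with the earlier identity $L_0 = \tfrac{1}{2}\mathcal{H}_{HO} + (\tfrac{1}{2}-\alpha)N$ and the definition $\widetilde{L}_n = g L_n g^{-1}$, I obtain
\begin{align*}
\widetilde{L}_0 \;=\; g L_0 g^{-1} \;=\; \tfrac{1}{2}\,g\,\mathcal{H}_{HO}\,g^{-1} + \big(\tfrac{1}{2}-\alpha\big) N \;=\; -\tfrac{1}{2}\,\mathcal{L}_{SD} + \big(\tfrac{1}{2}-\alpha\big) N,
\end{align*}
so $\mathcal{L}_{SD} = -2\widetilde{L}_0 + (1-2\alpha) N$ up to an additive scalar that is central.

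With this rewriting the commutator is a one-line consequence of the Virasoro algebra: since $\widetilde{L}_n$ inherits the relations $[\widetilde{L}_n, \widetilde{L}_m] = (n-m)\widetilde{L}_{n+m}$, the scalar part drops out and
\begin{align*}
[\mathcal{L}_{SD},\widetilde{L}_{-m}] \;=\; -2\,[\widetilde{L}_0,\widetilde{L}_{-m}] \;=\; -2m\,\widetilde{L}_{-m},
\end{align*}
completing the proof. The only part that requires any care is the bookkeeping around the conjugation by $g$: one has to check that the sign and the constant shift in the $L_0$--$\mathcal{H}_{HO}$ relation are consistent with (\ref{prop3_1}), and that the $\alpha$-dependent constant in $\widetilde{L}_0$ commutes with $\widetilde{L}_{-m}$ (which is automatic). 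No subtle analytic issue arises, since $\widetilde{L}_{-m}$ is a differential operator acting on the smooth function $Z(E,\eta)$, and the Virasoro commutation relations for $\widetilde{L}_n$ transfer from those of $L_n$ by conjugation with an invertible $g$.
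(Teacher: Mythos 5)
Your proposal is correct and follows essentially the same route as the paper: the authors likewise obtain $\mathcal{L}_{SD}=-2\widetilde{L}_0+(1-2\alpha)N$ from the gauge conjugation (\ref{prop3_1}) and $L_0=\tfrac12\mathcal{H}_{HO}+(\tfrac12-\alpha)N$, deduce $[\mathcal{L}_{SD},\widetilde{L}_{-m}]=-2[\widetilde{L}_0,\widetilde{L}_{-m}]=-2m\widetilde{L}_{-m}$, and combine this with $\mathcal{L}_{SD}Z=0$. Your write-up merely makes the sign and constant-shift bookkeeping explicit, which the paper leaves implicit.
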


\section{Schwinger-Dyson Equation for free energy}\label{sect3}
In this section, we will rewrite the results of Section \ref{sec2} using the free energy, 
which is the generating function of the connected Green's function.

As we saw in (\ref{SD_2})
the partition function $\mathcal{Z}[E,0] = Z(E, \eta)$
satisfies the partial differential equation
$
{\mathcal L}_{SD}\mathcal{Z}[E, 0]= 0 .
$

Let $F[E, J]:=\log\mathcal{Z}[E, J]$ be the free energy. Then the following is obtained immediately.
\begin{proposition}
The Schwinger-Dyson equation
${\mathcal L}_{SD}\mathcal{Z}[E, 0]= 0$ is equivalent to
\begin{align}
&\frac{\eta}{N}\sum_{i=1}^{N}\left(\frac{\partial^{2}}{\partial E_{i}^{2}}F[E, 0]\right)
+\frac{\eta}{N}\sum_{i=1}^{N}\left(\frac{\partial}{\partial E_{i}}F[E, 0]\right)\left(\frac{\partial}{\partial E_{i}}F[E, 0]\right)
\notag\\
&+\frac{\eta}{N}\sum_{i,j=1,i\neq j}^{N}\frac{1}{E_{i}-E_{j}}\left(\frac{\partial}{\partial E_{i}}F[E, 0]
-\frac{\partial}{\partial E_{j}}F[E, 0]\right)
-2\sum_{k=1}^{N}E_{k}\left(\frac{\partial}{\partial E_{k}}F[E, 0]\right)-N^{2}=0 .
\label{SDeq_FreeEnergy}
\end{align}
\end{proposition}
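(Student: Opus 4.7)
The plan is straightforward: since $\mathcal{Z}[E,0]$ is strictly positive (as the integral of a strictly positive density), the equation $\mathcal{L}_{SD}\mathcal{Z}[E,0]=0$ is equivalent to $\mathcal{Z}[E,0]^{-1}\mathcal{L}_{SD}\mathcal{Z}[E,0]=0$. I would prove the proposition by dividing through by $\mathcal{Z}[E,0]$ and rewriting each term of $\mathcal{L}_{SD}$ as defined in (\ref{LSD}) using the logarithm $F[E,0]=\log\mathcal{Z}[E,0]$.

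The only identity needed is the standard relation between derivatives of a partition function and those of its free energy. Differentiating $F=\log\mathcal{Z}$ gives $\partial_{E_i}\mathcal{Z}=\mathcal{Z}\,\partial_{E_i}F$, hence
\begin{align*}
\frac{1}{\mathcal{Z}}\frac{\partial \mathcal{Z}}{\partial E_i}=\frac{\partial F}{\partial E_i},\qquad
\frac{1}{\mathcal{Z}}\frac{\partial^2 \mathcal{Z}}{\partial E_i^2}=\frac{\partial^2 F}{\partial E_i^2}+\left(\frac{\partial F}{\partial E_i}\right)^{\!2},
\end{align*}
the second identity coming from one further differentiation of $\partial_{E_i}\mathcal{Z}=\mathcal{Z}\,\partial_{E_i}F$. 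Applying the first identity to the first-order pieces of $\mathcal{L}_{SD}$ — namely the Vandermonde-type term $\frac{\eta}{N}\sum_{i\ne j}\frac{1}{E_i-E_j}(\partial_{E_i}-\partial_{E_j})$ and the dilation term $-2\sum_{k}E_k\partial_{E_k}$ — converts them directly into the corresponding terms in (\ref{SDeq_FreeEnergy}). Applying the second identity to the Laplacian piece $\frac{\eta}{N}\sum_i \partial_{E_i}^2$ produces both the $\partial_{E_i}^{2}F$ term and the quadratic $(\partial_{E_i}F)^{2}$ term. The constant $-N^2$ is unaffected by the division. Collecting all contributions yields exactly (\ref{SDeq_FreeEnergy}).

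For the reverse implication, I would simply multiply (\ref{SDeq_FreeEnergy}) through by $\mathcal{Z}[E,0]$ and repackage $\partial_{E_i}^2 F+(\partial_{E_i}F)^2$ back into $\mathcal{Z}^{-1}\partial_{E_i}^2\mathcal{Z}$ via the same identity. There is no real obstacle here: the statement is essentially the observation that passing from $\mathcal{Z}$ to $F=\log\mathcal{Z}$ trades one second-order linear PDE for one quadratic-first-order/linear-second-order PDE via the Cole--Hopf-style substitution, and the only thing to check is bookkeeping of the nonlinear term generated by the second derivative.
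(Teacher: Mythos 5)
Your proposal is correct and is exactly the computation the paper has in mind: the paper offers no explicit proof, stating only that the result ``is obtained immediately'' from $\mathcal{L}_{SD}\mathcal{Z}[E,0]=0$ and $F=\log\mathcal{Z}$, which is precisely your substitution $\mathcal{Z}^{-1}\partial_{E_i}\mathcal{Z}=\partial_{E_i}F$ and $\mathcal{Z}^{-1}\partial_{E_i}^2\mathcal{Z}=\partial_{E_i}^2F+(\partial_{E_i}F)^2$ applied term by term to (\ref{LSD}). The positivity of $\mathcal{Z}[E,0]$, which you correctly invoke to justify dividing and taking the logarithm, is the only point needing mention, and your treatment of both directions of the equivalence is complete.
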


In the following, the eigenfunctions of $\mathcal{H}_{HO}$ with eigenvalue $-2m$ given 
by Theorem \ref{main2}, will also be rewritten using the free energy.

\begin{proposition}


Let us introduce $\Psi_m := L_{-m} (Z[E,0] g^{-1}) =  L_{-m} \Psi$
$(m= 0,1,2,\cdots )$. 
\begin{align}
\mathcal{H}_{HO} \Psi_m  =2m \Psi_m \label{t}
\end{align}

\end{proposition}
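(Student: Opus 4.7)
The statement is a direct corollary of Theorem \ref{main_thm} together with the commutator identity that the excerpt has already recorded, so the plan is very short.

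First, I would observe that everything needed is already in place: by Theorem \ref{main_thm}, $\Psi = g^{-1}Z(E,\eta)$ satisfies $\mathcal{H}_{HO}\Psi = 0$, and the excerpt records the identity
\begin{align*}
\left[\tfrac{1}{2}\mathcal{H}_{HO}, L_{-m}\right] = m\,L_{-m},
\end{align*}
which comes from the Virasoro relation $[L_0, L_{-m}] = m L_{-m}$ together with $L_0 = \frac{1}{2}\mathcal{H}_{HO} + (\frac{1}{2}-\alpha)N$ (the additive constant drops out of the commutator). Multiplying by $2$ gives $[\mathcal{H}_{HO}, L_{-m}] = 2m\,L_{-m}$.

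Then the calculation is a one-liner: acting on $\Psi$,
\begin{align*}
\mathcal{H}_{HO}\Psi_m = \mathcal{H}_{HO}L_{-m}\Psi = [\mathcal{H}_{HO},L_{-m}]\Psi + L_{-m}\mathcal{H}_{HO}\Psi = 2m\,L_{-m}\Psi + 0 = 2m\,\Psi_m,
\end{align*}
which is exactly (\ref{t}).

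There is essentially no obstacle. The only thing worth double-checking is the consistency between the two forms of the statement used in the paper. Theorem \ref{main2} is phrased in terms of $\tilde{L}_{-m} = g L_{-m} g^{-1}$ acting on $Z(E,\eta)$, whereas the present proposition is phrased in terms of $L_{-m}$ acting on $\Psi = g^{-1}Z$. These are identical statements, since $L_{-m}\Psi = L_{-m}g^{-1}Z = g^{-1}(gL_{-m}g^{-1})Z = g^{-1}\tilde{L}_{-m}Z$, and by (\ref{prop3_1}) the intertwining $g^{-1}\mathcal{L}_{SD}g = -\mathcal{H}_{HO}$ turns the eigenvalue $-2m$ of $\mathcal{L}_{SD}$ on $\tilde{L}_{-m}Z$ into the eigenvalue $+2m$ of $\mathcal{H}_{HO}$ on $\Psi_m$. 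I would include a brief sentence confirming this sign/gauge bookkeeping so that the reader sees the proposition as a reformulation of Theorem \ref{main2} rather than a new computation.
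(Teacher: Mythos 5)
Your proposal is correct and follows essentially the same route as the paper: the paper's proof likewise cites $\mathcal{H}_{HO}\Psi=0$ together with $[\mathcal{H}_{HO},L_{-m}]=2mL_{-m}$ and declares the result immediate. Your added remark reconciling this with the $\tilde{L}_{-m}$/$\mathcal{L}_{SD}$ formulation of Theorem \ref{main2} is a harmless (and arguably helpful) elaboration of the same bookkeeping the paper leaves implicit.
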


\begin{proof}		
From the Schwinger-Dyson equation $\hat{\mathcal{L}}_{SD} Z[E,0]=0 ,$ or 
equivalent equation $\mathcal{H}_{HO} \Psi =0$, and $[ \mathcal{H}_{HO} , L_{-m} ] =  2m L_{-m}$,
this proposition follows immediately.
\end{proof}

\begin{proposition}\label{Prop3_3}
For a non-negative integer $m$, $\Psi_m = L_{-m}e^{F[E,0]}g^{-1}$ is given by
%
%
\begin{align}
\Psi_m 
=&-\left(-\frac{1}{\sqrt{2}}\right)^{m}\left(\frac{\eta}{N}\right)^{\frac{N(N-1)}{4}} e^{\frac{1}{2}\sum_{k=1}^{N}y_{k}^{2}} \times
\notag \\
& \sum_{l=1}^{N} \Biggl\{ y_{l}\left(\frac{\partial}{\partial y_{l}}\right)^{m+1} 
 +\frac{1}{{2}}  \left(\frac{\partial}{\partial y_{l}}\right)^{m+2} 
+ \alpha(m+1) \left(\frac{\partial}{\partial y_{l}}\right)^{m} \Biggl\}
e^{F[E,0]-V[E]}.
\end{align}
Here $V[y]:= V[E]:= \sum_{k=1}^{N}y_{k}^{2}+\sum_{1\leq i<j\leq N}\log\left(y_{j}-y_{i}\right)$ with $\displaystyle y_i = \sqrt{\frac{N}{\eta}} E_i$.
\end{proposition}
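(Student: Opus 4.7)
The plan is to compute $L_{-m}\Psi$ directly by translating the ladder operators $a_i, a_i^\dagger$ into an explicit differential operator in $y_i$ and then pulling the Gaussian prefactor $e^{\frac{1}{2}\sum_k y_k^2}$ through $L_{-m}$. The first step is to identify, in the variables $y_i = \sqrt{N/\eta}\,E_i$, the decomposition $\Psi = g^{-1}e^{F} = c_N\, e^{\frac{1}{2}\sum_k y_k^2}\,e^{F-V}$ with $c_N = (\eta/N)^{N(N-1)/4}$, which is a direct rewriting using the definitions of $g$ and $V$. Once this is in place, the task reduces to evaluating the conjugation $e^{-\frac{1}{2}\sum_k y_k^2}\,L_{-m}\,e^{\frac{1}{2}\sum_k y_k^2}$ as an explicit differential operator in the $y_i$, and then applying it to $e^{F-V}$.

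Since $L_{-m} = \sum_i L_{-m}^{(i)}$ with each $L_{-m}^{(i)}$ acting only on $y_i$, it suffices to work one variable at a time. From the basic shift $e^{-\frac{1}{2}y_i^2}\,\partial_i\,e^{\frac{1}{2}y_i^2} = \partial_i + y_i$, one immediately obtains $e^{-\frac{1}{2}y_i^2} a_i^\dagger e^{\frac{1}{2}y_i^2} = -\tfrac{1}{\sqrt{2}}\partial_i$ and $e^{-\frac{1}{2}y_i^2} a_i\, e^{\frac{1}{2}y_i^2} = \tfrac{1}{\sqrt{2}}(2y_i + \partial_i)$, and hence $e^{-\frac{1}{2}y_i^2}(a_i^\dagger)^{m+1} e^{\frac{1}{2}y_i^2} = (-1/\sqrt{2})^{m+1}\,\partial_i^{m+1}$. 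I would then use $[a_i,a_i^\dagger]=1$ to reorder $L_{-m}^{(i)} = (a_i^\dagger)^{m+1} a_i + (1-\alpha)(m+1)(a_i^\dagger)^m$, substitute the conjugated expressions, and expand $\partial_i^{m+1}(2y_i + \partial_i)$ via Leibniz. The combination of the $(1-\alpha)(m+1)$ coefficient with the $(m+1)$ coming from the commutator $\partial_i^{m+1} y_i - y_i\,\partial_i^{m+1}$ conspires to produce $\alpha(m+1)$ as the coefficient of $\partial_i^m$, while the overall scalar factor assembles as $-(-1/\sqrt{2})^m$.

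Putting the two steps together, $L_{-m}\Psi = c_N\, e^{\frac{1}{2}\sum_k y_k^2}\bigl(e^{-\frac{1}{2}\sum_k y_k^2}\,L_{-m}\,e^{\frac{1}{2}\sum_k y_k^2}\bigr) e^{F-V}$ reproduces the claimed expression verbatim once the sum over $i$ is performed. The main obstacle is the bookkeeping in the middle step: verifying that the $(1-\alpha)$ coefficient from the normal-ordered form of $L_{-m}^{(i)}$ combines with the $(m+1)$ produced by the commutator to give precisely $\alpha(m+1)$, rather than $(1-\alpha)(m+1)$ or $(1-2\alpha)(m+1)$, and keeping the powers of $-\tfrac{1}{\sqrt{2}}$ consistent. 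Beyond that coefficient check, no nontrivial identities are needed—the content of the proposition is a repackaging of $L_{-m}\Psi$ as a differential operator in the $y_i$ variables acting on $e^{F-V}$.
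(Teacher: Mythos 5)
Your proposal is correct and follows essentially the same route as the paper: both decompose $\Psi = (\eta/N)^{N(N-1)/4}e^{\frac12\sum_k y_k^2}e^{F-V}$, conjugate the ladder operators by the Gaussian via $a_i^\dagger = -\tfrac{1}{\sqrt2}e^{\frac12\sum_k y_k^2}\partial_{y_i}e^{-\frac12\sum_k y_k^2}$, and track the commutator that assembles the $\alpha(m+1)$ coefficient. The only (cosmetic) difference is that the paper reorders $L_{-m}$ into the form $a_i(a_i^\dagger)^{m+1}-\alpha(m+1)(a_i^\dagger)^m$ so that $(2y_i+\partial_i)\partial_i^{m+1}$ needs no further Leibniz expansion, whereas your normal-ordered form $(a_i^\dagger)^{m+1}a_i+(1-\alpha)(m+1)(a_i^\dagger)^m$ requires the extra commutation $\partial_i^{m+1}y_i = y_i\partial_i^{m+1}+(m+1)\partial_i^m$, which you correctly account for.
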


\begin{proof}
Using (\ref{L_n}),
\begin{align*}
\Psi_m =& L_{-m} (e^{F[E,0]}g^{-1})=
\sum_{i=1}^{N}\left(\alpha(a_{i}^{\dagger})^{m+1}a_{i}+(1-\alpha)a_{i}(a_{i}^{\dagger})^{m+1}\right)e^{F[E,0]}g^{-1}\notag\\
=&\sum_{i=1}^{N}\left(a_{i}(a_{i}^{\dagger})^{m+1}-\alpha(m+1)(a_{i}^{\dagger})^{m}\right)e^{F[E,0]}g^{-1} \notag \\
=&
\left(\frac{\eta}{N}\right)^{\frac{N(N-1)}{4}}e^{\frac{1}{2}\sum_{k=1}^{N}y_{k}^{2}}\left(-\frac{1}{\sqrt{2}}\right)^{m}\notag\\
& \Biggl\{ - \sum_{l=1}^{N}y_{l}\left(\frac{\partial}{\partial y_{l}}\right)^{m+1} 
-\frac{1}{2}\sum_{l=1}^{N}\left(\frac{\partial}{\partial y_{l}}\right)^{m+2} 
- \alpha(m+1)\sum_{l=1}^{N}\left(\frac{\partial}{\partial y_{l}}\right)^{m}  \Biggl\} \notag\\
& e^{F[E,0]-\sum_{k=1}^{N}y_{k}^{2}+\sum_{1\leq i<j\leq N}\log\left(y_{j}-y_{i}\right)} .
\end{align*}
To derive the third equality, the following formula was used:
\begin{align}
a_{i}^{\dagger}=&\frac{1}{\sqrt{2}}\left(y_{i}-\frac{\partial}{\partial y_{i}}\right)=-\frac{1}{\sqrt{2}}e^{\frac{1}{2}\sum_{k=1}^{N}y_{k}^{2}}\frac{\partial}{\partial y_{i}}e^{-\frac{1}{2}\sum_{k=1}^{N}y_{k}^{2}} .
\end{align}
\end{proof}

The following proposition requires a special case of the Fa\`a di Bruno's formula
\begin{align}
  \left\{e^{f(z)}\right\}^{(n)}=e^{f(z)}
  \sum_{r=0}^{n}B_{n,r}(f^{(1)},f^{(2)},\cdots,f^{(n-r+1)}) \label{bell_formula}
\end{align}
in terms of the Bell polynomials
\begin{align*}
  &B_{n,r}(f_{1},f_{2},\cdots,f_{n-r+1})
  \\
  &=
\sum_{\substack{  j_{1}+j_{2}+\cdots+j_{n-r+1}=r,\\
    j_{1}+2j_{2}+\cdots+(n-r+1)j_{n-r+1}=n}}
\hspace*{-1em}
\frac{n!}{j_{1}!j_{2}!\cdots j_{n-r+1}!}\left(\frac{f_{1}}{1!}\right)^{j_{1}}\left(\frac{f_{2}}{2!}\right)^{j_{2}}\cdots\left(\frac{f_{n-r+1}}{(n-r+1)!}\right)^{j_{n-r+1}}.
\end{align*}
We adopt the slightly extended Bell polynomials with $B_{0,0}=1$ and
$B_{n, 0}=0~ (n>0)$ so that the equality holds even when $n=0$.

From Proposition \ref{Prop3_3} and (\ref{bell_formula}),
we obtain the following expression of the
eigenfunction $\Psi_m$ of $\mathcal{H}_{HO}$.
\begin{proposition}
  The eigenfunction $\Psi_m := L_{-m} \Psi ~ (m \ge 0)$ is expressed by
  using Bell polynomials as follows:
\begin{align}
\Psi_m
=& - \left(-\frac{1}{\sqrt{2}}\right)^{m} \Psi ~
\sum_{l=1}^N \Big\{
y_l \sum_{r=0}^{m+1} B_{m+1 , r}^l +\frac{1}{2}   \sum_{r=0}^{m+2} B_{m+2 , r}^l  
+ \alpha (m+1) \sum_{r=0}^{m} B_{m , r}^l 
\Big\} \label{formula_higher_eigen}
\end{align}
where 
$$ 
 B_{m , r}^l := B_{m,r}\Biggl( \left(\frac{\partial}{\partial y_{l}}\right)^{1}(F[E,0]-V[E]),\cdots,\left(\frac{\partial}{\partial y_{l}}\right)^{m-r+1}(F[E,0]-V[E])\Biggl) ,
$$
and $$V[E]= \sum_{k=1}^{N}y_{k}^{2}-\!\!\! \sum_{1\leq i<j\leq N} \!\! \log(y_{j}-y_{i})
= \frac{N}{\eta}  \sum_{k=1}^{N}E_{k}^{2}- \!\!\! \sum_{1\leq i<j\leq N} \!\! \log(E_{j}-E_{i})
-\frac{N(N-1)}{4} \log \frac{N}{\eta}.$$
\end{proposition}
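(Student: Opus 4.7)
The plan is to apply the Faà di Bruno formula (\ref{bell_formula}) directly to the expression for $\Psi_m$ already obtained in Proposition \ref{Prop3_3}. That proposition writes $\Psi_m$ as a fixed linear combination of differential operators $y_l(\partial/\partial y_l)^{m+1}$, $(\partial/\partial y_l)^{m+2}$, and $(\partial/\partial y_l)^{m}$ acting on the single scalar $e^{F[E,0]-V[E]}$, multiplied by the prefactor $(\eta/N)^{N(N-1)/4}\,e^{\frac{1}{2}\sum_k y_k^2}$. Since only the pure partial derivatives still need to be carried out on an exponential, the Bell-polynomial expansion does everything in one pass.

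Concretely, for each $l$ and each non-negative integer $n$, Faà di Bruno gives
\[
\left(\frac{\partial}{\partial y_l}\right)^{n} e^{F[E,0]-V[E]} \;=\; e^{F[E,0]-V[E]}\sum_{r=0}^{n} B_{n,r}^{l},
\]
with $B_{n,r}^{l}$ exactly as defined in the statement. First, I would apply this identity three times inside the bracket of Proposition \ref{Prop3_3} — with $n=m+1,\,m+2,\,m$ — and pull the common factor $e^{F[E,0]-V[E]}$ outside the $l$-sum. Next, I would simplify the prefactor by recalling $g = \Delta^{-1}(E)\,e^{\frac{N}{2\eta}\sum_i E_i^{2}}$ and $\Psi = e^{F[E,0]}g^{-1}$; writing $\Delta(y)=e^{\sum_{i<j}\log(y_j-y_i)}$ and changing variables via $y_i=\sqrt{N/\eta}\,E_i$, one checks
\[
\left(\frac{\eta}{N}\right)^{\frac{N(N-1)}{4}} e^{\frac{1}{2}\sum_{k=1}^{N} y_k^{2}}\, e^{F[E,0]-V[E]} \;=\; \Psi,
\]
because the $-\sum y_k^{2}$ inside $V[E]$ combines with $\tfrac{1}{2}\sum y_k^{2}$ from the prefactor to yield $-\tfrac{1}{2}\sum y_k^{2}$, while the $+\sum_{i<j}\log(y_j-y_i)$ from $V[E]$ reproduces $\Delta(y)$. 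Substituting back yields precisely formula (\ref{formula_higher_eigen}).

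The final assertion relating the two forms of $V[E]$ is then a matter of pure substitution: replacing $y_k^{2}$ by $\frac{N}{\eta}E_k^{2}$ and $\log(y_j-y_i)=\log(E_j-E_i)+\tfrac{1}{2}\log(N/\eta)$ for each of the $N(N-1)/2$ pairs produces the constant $-\frac{N(N-1)}{4}\log\frac{N}{\eta}$ displayed in the statement. There is no real obstacle in this proof; the only delicate point is bookkeeping the signs inside $V[E]$ so that the exponent appearing at the end of the proof of Proposition \ref{Prop3_3} agrees with $F[E,0]-V[E]$ as defined here, and tracking the scalar $(\eta/N)^{N(N-1)/4}$ consistently through the change of variables.
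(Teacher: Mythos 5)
Your proposal is correct and is exactly the route the paper takes: it derives the formula by applying the Fa\`a di Bruno/Bell-polynomial identity (\ref{bell_formula}) to the three pure-derivative operators in Proposition \ref{Prop3_3} and then recombining the prefactor $\left(\frac{\eta}{N}\right)^{\frac{N(N-1)}{4}}e^{\frac{1}{2}\sum_k y_k^2}\,e^{F[E,0]-V[E]}$ into $\Psi$. You also correctly resolve the sign bookkeeping in $V[E]$ (the version with $-\sum_{i<j}\log(y_j-y_i)$, consistent with the exponent at the end of the proof of Proposition \ref{Prop3_3} rather than with the $+$ sign written in its statement) and the constant $-\frac{N(N-1)}{4}\log\frac{N}{\eta}$ arising from the change of variables.
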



\section{Loop  Equation}\label{sect4}


Let us consider more general Schwinger-Dyson equations called loop equations.
For details on loop equations and related calculations, refer to \cite{Eynard:2015aea}.

\begin{align*}
I_{k_{1},k_{2},\cdots,k_{n}}:=&\sum_{i,j=1}^{N}\frac{1}{\mathcal{Z}[E,0]}\int_{H(N)}d\Phi\frac{\partial}{\partial\Phi_{ij}}
\left\{ \left(\Phi^{k_{1}}\right)_{ij}\mathrm{Tr}\left[\Phi^{k_{2}}\right]\cdots\mathrm{Tr}\left[\Phi^{k_{n}}\right]e^{-S[\Phi]} \right\} =0,
\end{align*}
where $\frac{\partial}{\partial\Phi_{ij}}=\frac{1}{2}\left(\frac{\partial}{\partial\Phi^{\mathrm{Re}}_{ij}}-i\frac{\partial}{\partial\Phi^{\mathrm{Im}}_{ij}}\right)$ for $i\neq j$.
Recall the following useful formulae  
\begin{align}
\sum_{i,j=1}^{N}\frac{\partial}{\partial\Phi_{ij}}\left(\Phi^{k}\right)_{ij}=
\left\{ \,
    \begin{aligned}
    & 0 \hspace{2mm}(k=0)\\
    & N^{2} \hspace{2mm}(k=1)\\
    & \sum_{l=0}^{k-1}\mathrm{Tr}\Phi^{l}\mathrm{Tr}\Phi^{k-1-l} \hspace{2mm}(k>1)
    \end{aligned}
\right.\label{b}
\end{align}
\begin{align}
\sum_{i,j=1}^{N}\left(\Phi^{k_{1}}\right)_{ij}\frac{\partial}{\partial\Phi_{ij}}\mathrm{Tr}\Phi^{k}=&\sum_{i,j=1}^{N}\left(\Phi^{k_{1}}\right)_{ij}k\left(\Phi^{k-1}\right)_{ji}=k\mathrm{Tr}\Phi^{k+k_{1}-1} . \label{c}
\end{align}
In addition, using the computation 
\begin{align}
\sum_{i,j=1}^{N}\left(\Phi^{k_{1}}\right)_{ij}\frac{\partial}{\partial\Phi_{ij}}e^{-S[\Phi]}=&-N\sum_{i,j=1}^{N}\left(\Phi^{k_{1}}\right)_{ij}\frac{\partial}{\partial\Phi_{ij}}\left(\mathrm{Tr}E\Phi^{2}+\frac{\eta}{4}\mathrm{Tr}\Phi^{4}\right)e^{-S[\Phi]}\notag\\
=&-N\mathrm{Tr}\left(2E\Phi+\eta\Phi^{3}\right)\Phi^{k_{1}}e^{-S[\Phi]} , 
\end{align}
and definition $L'\left(\Phi,E\right):=N\left(2E\Phi+\eta\Phi^{3}\right)$,
we obtain 
\begin{align}
\sum_{i,j=1}^{N}\left(\Phi^{k_{1}}\right)_{ij}\frac{\partial}{\partial\Phi_{ij}}e^{-S[\Phi]}=-N\mathrm{Tr}\left(L'\left(\Phi,E\right)\Phi^{k_{1}}\right)e^{-S[\Phi]}.  \label{a}
\end{align}


From (\ref{a}),(\ref{b}),(\ref{c}), the loop equations 
\begin{align}
&N\langle\mathrm{Tr}(\Phi^{k_{2}})\cdots\mathrm{Tr}(\Phi^{k_{n}})\mathrm{Tr}\left(L'(\Phi,E)\Phi^{k_{1}}\right)\rangle
\notag\\
=&(1-\delta_{k_{1}0})\sum_{l=0}^{k_{1}-1}\langle\mathrm{Tr}(\Phi^{l})\mathrm{Tr}\Phi^{k_{1}-1-l}\mathrm{Tr}(\Phi^{k_{2}})\cdots\mathrm{Tr}(\Phi^{k_{n}})\rangle\notag\\
&+\sum_{l=2}^{n} k_{l}\langle\mathrm{Tr}(\Phi^{k_{2}})\cdots\mathrm{Tr}(\Phi^{k_{l-1}})\left(\mathrm{Tr}\Phi^{k_{l}+k_{1}-1}\right)\mathrm{Tr}(\Phi^{k_{l+1}})\cdots\mathrm{Tr}(\Phi^{k_{n}})\rangle \label{loop1}
\end{align}
are obtained for arbitaraly non-negative integers $k_1 , k_2 , \cdots , k_n$.

We introduce resolvents by
\begin{align*}
R(u)=&\mathrm{Tr}\left(\frac{1}{u-\Phi}\right)
=\frac{1}{u}\sum_{k=0}^{\infty}\frac{1}{u^{k}}\mathrm{Tr}\Phi^{k}.
\end{align*}
After multiplying equation (\ref{loop1}) by $1/u^{k_1 +1} , 1/u_2^{k_2+1} , \cdots , 1/u_n^{k_n+1}$,
taking sum for each $k_1, \cdots, k_n$, then 
we get the loop equation.
\begin{align}
&N\left\langle\mathrm{Tr}\left(L'(\Phi,E)\frac{1}{u-\Phi}\right)R(u_{2})\cdots R(u_{n})\right\rangle\notag\\
=&\left\langle R^{2}(u)R(u_{2})R(u_{3})\cdots R(u_{n})\right\rangle\notag\\
&+\sum_{l=2}^{n}\frac{\partial}{\partial u_{l}}\left(\frac{\left\langle R(u)R(u_{2})\cdots R(u_{l-1})R(u_{l+1})\cdots R(u_{n})\right\rangle
-\left\langle R(u_{2})\cdots R(u_{n})\right\rangle}{u-u_{l}}\right) . \label{j}
\end{align}

We introduce ${\mathcal U}=\{u_{2},u_{3},\cdots,u_{n}\}$,
\begin{align}
\hat{P}_{n-1}(u,{\mathcal U}):=\left\langle\mathrm{Tr}\left(\left(L'(u,E)-L'(\Phi,E)\right)\frac{1}{u-\Phi}\right)R(u_{2})\cdots R(u_{n})\right\rangle,
\end{align}
and
\begin{align}
\hat{R}_{n}(u_{1},\cdots,u_{n})=&\left\langle\mathrm{Tr}\frac{1}{u_{1}-\Phi}\cdots\mathrm{Tr}\frac{1}{u_{n}-\Phi}\right\rangle
=\left\langle R(u_{1})\cdots R(u_{n})\right\rangle .
\end{align}
Using these symbols, (\ref{j}) is rewritten as
%
%
\begin{align}
N\left(\mathrm{Tr}(L'(u,E))\hat{R}_{n}(u,{\mathcal U})-\hat{P}_{n-1}(u,{\mathcal U})\right)=&\hat{R}_{n+1}(u,u,{\mathcal U})+\sum_{l=2}^{n}\frac{\partial}{\partial u_{l}}\frac{\hat{R}_{n-1}(u,{\mathcal U}\backslash\{u_{l}\})-\hat{R}_{n-1}({\mathcal U})}{u-u_{l}} . \label{loop_2}
\end{align}

We define the multipoint cumulant resolvent,
\begin{align}
R_{n}(u_{1},\cdots,u_{n})=&\left\langle\mathrm{Tr}\frac{1}{u_{1}-\Phi}\cdots\mathrm{Tr}\frac{1}{u_{n}-\Phi}\right\rangle_{c}\notag\\
=&\left\langle R(u_{1})\cdots R(u_{n})\right\rangle_{c} ,
\end{align}
where
\begin{align}
R_{n}(u_{1}.\cdots,u_{n}):=\left.\partial_{s_{1}}\cdots\partial_{s_{n}}\log\left\langle\exp\left(\sum_{i=1}^{N}s_{i}R(u_{i})\right)\right\rangle\right|_{s_{1}=\cdots=s_{n}=0}.
\end{align}
Similarly ${P}_{n-1}(u,{\mathcal U})$ is defined by
\begin{align}
{P}_{n-1}(u,{\mathcal U}):=\left\langle\mathrm{Tr}\left(\left(L'(u,E)-L'(\Phi,E)\right)\frac{1}{u-\Phi}\right)R(u_{2})\cdots R(u_{n})\right\rangle_c .
\end{align}

The above discussions give in 
the simplest case $n=1$ (we put $k_2= k_3 = \cdots = k_n=0$
from the begining of Section \ref{sect4})
\begin{align}
N\left\langle\mathrm{Tr}\left(L'(\Phi,E)\frac{1}{u-\Phi}\right) \right\rangle
=\left\langle R^{2}(u)\right\rangle.
\end{align}
This equation is rewritten in terms of cumulants as
\begin{align}
N\left\langle\mathrm{Tr}\left(L'(\Phi,E)\frac{1}{u-\Phi}\right) \right\rangle_c
=\left\langle R^{2}(u)\right\rangle_c + \left\langle R (u)\right\rangle_c \left\langle R (u)\right\rangle_c .
\label{cum_n=1}
\end{align}

Similarly, (\ref{j}) or (\ref{loop_2}) is rewritten as follows:
\begin{proposition}
\begin{align}
&N\left(\mathrm{Tr}\left(L'(u,E)\right)R_{n}(u,{\mathcal U})-P_{n-1}(u,{\mathcal U})\right) = R_{n+1}(u,u,{\mathcal U})
\notag \\
&+ \!\!\!\! \sum_{J(\vec{k})\cup J(\vec{l})={\mathcal U}} \!\!\!\! 
R_{|\vec{k}|+1}(u,J(\vec{k}))R_{_{|\vec{l}|+1}}(u,J(\vec{l}))
+\sum_{u_l \in{\mathcal U}}\frac{\partial}{\partial u_{l}}\frac{R_{n-1}(u,{\mathcal U} \backslash\{u_{l}\})-R_{n-1}({\mathcal U})}{u-u_{l}} . \label{loop_cumulant}
\end{align}
Here ${\mathcal U} = \{u_2, \cdots, u_n \}$ and 
$J(\vec{k}):= J_{k_1,\cdots, k_j} := \{ u_{k_1} , \cdots , u_{k_j} \} \subset  {\mathcal U} $.
\end{proposition}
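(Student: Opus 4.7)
The approach is to derive the cumulant (connected) form of the loop equation from its moment form (\ref{loop_2}) by systematically applying the moment-cumulant relation
$$\langle X_1 \cdots X_m \rangle = \sum_{\pi} \prod_{B \in \pi} \langle X_B \rangle_c,$$
summed over set partitions $\pi$ of $\{1,\ldots,m\}$. I would proceed by induction on $n$, with the base case $n=1$ verified directly by (\ref{cum_n=1}).

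For the inductive step, I would peel off the distinguished observable from each moment appearing in (\ref{loop_2}). With $T(u) := \mathrm{Tr}\bigl((L'(u,E)-L'(\Phi,E))/(u-\Phi)\bigr)$ denoting the insertion in $\hat{P}_{n-1}$, the conditioning decompositions take the form
\begin{align*}
\hat{R}_n(u,{\mathcal U}) &= \sum_{A\sqcup B = {\mathcal U}} R_{|A|+1}(u,A)\,\hat{R}_{|B|}(B), \\
\hat{P}_{n-1}(u,{\mathcal U}) &= \sum_{A\sqcup B = {\mathcal U}} P_{|A|}(u,A)\,\hat{R}_{|B|}(B), \\
\hat{R}_{n+1}(u,u,{\mathcal U}) &= \sum_{A\sqcup B = {\mathcal U}} R_{|A|+2}(u,u,A)\,\hat{R}_{|B|}(B) \\
&\quad + \sum_{A \sqcup B \sqcup C = {\mathcal U}} R_{|A|+1}(u,A)\, R_{|B|+1}(u,B)\, \hat{R}_{|C|}(C),
\end{align*}
where the first sum in the last expansion collects partitions in which the two copies of $u$ share a block, and the second collects those in which they sit in two different blocks. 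The two terms inside the derivative in (\ref{loop_2}) are expanded analogously: $\hat{R}_{n-1}(u,{\mathcal U}\setminus\{u_l\}) = \sum_{A\sqcup B = {\mathcal U}\setminus\{u_l\}} R_{|A|+1}(u,A)\,\hat{R}_{|B|}(B)$ (peeled at $u$), and $\hat{R}_{n-1}({\mathcal U})$ as a plain partition sum over ${\mathcal U}$, equivalently obtained by peeling at $u_l$.

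After substituting these expansions into (\ref{loop_2}) and grouping terms according to the size of the disconnected spectator factor $\hat{R}_{|B|}(B)$ (or $\hat{R}_{|C|}(C)$), the piece with trivial spectator ($B=\emptyset$, respectively $C=\emptyset$) reproduces exactly (\ref{loop_cumulant}): the fully contracted contributions give the single cumulants $R_n$, $P_{n-1}$, $R_{n+1}(u,u,\cdot)$; the bipartition sum reduces to $\sum_{J(\vec{k})\sqcup J(\vec{l})={\mathcal U}}$; and the derivative terms yield the stated $\partial/\partial u_l$-sum on the right-hand side. For each nontrivial spectator $B$, the factor multiplying $\hat{R}_{|B|}(B)$ is itself an instance of the moment loop equation (\ref{loop_2}) with ${\mathcal U}$ replaced by ${\mathcal U}\setminus B$, and hence with $n$ replaced by $n-|B|<n$; by the inductive hypothesis these spectator contributions cancel pairwise, leaving only the connected equation.

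The main obstacle will be the combinatorial bookkeeping in the derivative term, where $u_l$ plays the role of a distinguished argument in $\hat{R}_{n-1}(u,{\mathcal U}\setminus\{u_l\})$ but is merely a generic argument in $\hat{R}_{n-1}({\mathcal U})$. One must carefully verify that the subtraction structure, together with the partial fraction $1/(u-u_l)$, interacts with the moment-cumulant expansion in such a way that the non-connected contributions pair up and cancel via the inductive hypothesis exactly as in the other terms, leaving the difference $R_{n-1}(u,{\mathcal U}\setminus\{u_l\}) - R_{n-1}({\mathcal U})$ in (\ref{loop_cumulant}). The $n=1$ case (\ref{cum_n=1}), where the derivative sum is empty and only the simple identity $\langle R^2(u) \rangle = \langle R^2(u) \rangle_c + \langle R(u) \rangle_c \langle R(u) \rangle_c$ is required, provides a clean sanity check for this machinery.
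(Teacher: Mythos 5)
Your proposal is correct and follows the same overall template as the paper's proof --- induction on $n$, a moment-to-cumulant peeling of every term in (\ref{loop_2}), grouping by a residual disconnected moment factor, and killing the coefficients of the nontrivial factors with the inductive hypothesis --- but the decomposition itself is genuinely different. The paper adjoins one spectator $u_{n+1}$ per induction step and isolates the cumulant block containing $u_{n+1}$, writing every moment as $\sum_{J}(\text{cumulant containing } u_{n+1} \text{ and } J)\times\hat{R}(\mathcal{U}\setminus J)$; this forces an extra rearrangement (\ref{c2_deform}) to bring the quadratic term into the form $\sum_{J(\vec{k})\cup J(\vec{l})}R\,R$, and the claim is finally read off as the coefficient of the trivial spectator $\hat{R}_0=1$ at $J=\mathcal{U}$. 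You instead peel at the distinguished argument $u$ (and at the $L'$-insertion) and group by the spectator $\hat{R}_{|B|}(B)$; the bipartition sum and the derivative difference then emerge directly in the stated form, provided $\hat{R}_{n-1}(\mathcal{U})$ is peeled at $u_l$ exactly as you indicate, so that for fixed $B$ the subtraction reproduces $R_{n-1}(u,\mathcal{V}\setminus\{u_l\})-R_{n-1}(\mathcal{V})$ with $\mathcal{V}=\mathcal{U}\setminus B$ and $\partial_{u_l}$ never acts on the spectator. One correction to the write-up: the factor multiplying a nontrivial spectator $\hat{R}_{|B|}(B)$ is an instance of the \emph{cumulant} equation (\ref{loop_cumulant}) for $\mathcal{V}$ --- it is built entirely from $R$'s and $P$'s, not from $\hat{R}$'s --- and it vanishes term by term by the inductive hypothesis; it is not ``an instance of the moment loop equation'' (which would hold unconditionally) and nothing ``cancels pairwise.'' With that phrase repaired, and with the logic stated explicitly as ``all $B\neq\emptyset$ coefficients vanish by induction, hence the $B=\emptyset$ coefficient equals the vanishing total,'' your argument closes.
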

The result is almost the same as the loop equation of the normal Hermitian
one-matrix model \cite{Eynard:2015aea}, 
despite the fact that $E$ is attached to the kinetic term of the matrix model we are considering.
However, as we have been unable to find a paper that rigorously
details the process of rewriting it in terms of cumulants, 
we provide a proof of this for the convenience of the reader.

\begin{proof}
$n=2$ case of (\ref{loop_2}) is given by
\begin{align}
N\left\langle\mathrm{Tr}\left(L'(\Phi,E)\frac{1}{u-\Phi}\right)R(u_{2}) \right\rangle
=\left\langle R^{2}(u)R(u_{2}) \right\rangle
+\frac{\partial}{\partial u_{2}}\left(\frac{\left\langle R(u)\right\rangle-\left\langle R(u_{2})\right\rangle}{u-u_{2}}\right)
\end{align}
By definition of cumulants,
this is rewritten as
\begin{align}
&N\left( \left\langle\mathrm{Tr}\left(L'(\Phi,E)\frac{1}{u-\Phi}\right)R(u_{2}) \right\rangle_c
+ \left\langle\mathrm{Tr}\left(L'(\Phi,E)\frac{1}{u-\Phi}\right)\right\rangle_c  \left\langle R(u_{2}) \right\rangle_c \right)
=
\notag \\
&\left\langle R^{2}(u)R(u_{2}) \right\rangle_ c
+\left\langle R^{2}(u)\right\rangle_ c \left\langle R(u_{2}) \right\rangle_ c
+\left\langle R(u)\right\rangle_ c^2 \left\langle R(u_{2}) \right\rangle_ c
+2 \left\langle R(u) \right\rangle_ c \left\langle R(u) R(u_{2}) \right\rangle_ c
\notag \\
&+\frac{\partial}{\partial u_{2}}\left(\frac{\left\langle R(u)\right\rangle_c-\left\langle R(u_{2})\right\rangle_c }{u-u_{2}}\right)
.
\end{align}
Using (\ref{cum_n=1}), this is reduced as
\begin{align}
&N \left\langle\mathrm{Tr}\left(L'(\Phi,E)\frac{1}{u-\Phi}\right)R(u_{2}) \right\rangle_c
=
\notag \\
&\left\langle R^{2}(u)R(u_{2}) \right\rangle_ c
+2 \left\langle R(u) \right\rangle_ c \left\langle R(u) R(u_{2}) \right\rangle_ c
+\frac{\partial}{\partial u_{2}}\left(\frac{\left\langle R(u)\right\rangle_c-\left\langle R(u_{2})\right\rangle_c }{u-u_{2}}\right)
.
\end{align}
This equation is the case of $n=2$ in (\ref{loop_cumulant}).

We assume the above equations (\ref{loop_cumulant}) for $n=2, 3, \cdots , n$  to be satisfied.
As we proved it above, the $n+1$ case of (\ref{loop_2}) is satisfied. 
\begin{align}
N\left(\mathrm{Tr}(L'(u,E))\hat{R}_{n+1}(u,{\mathcal U}')-\hat{P}_{n}(u,{\mathcal U}')\right)=&\hat{R}_{n+2}(u,u,{\mathcal U}')+\sum_{l=2}^{n+1}\frac{\partial}{\partial u_{l}}\frac{\hat{R}_{n}(u,{\mathcal U}'\backslash\{u_{l}\})-\hat{R}_{n}({\mathcal U}')}{u-u_{l}} , \label{n+1start}
\end{align}
where ${\mathcal U}'=\{u_{2},\cdots,u_{n+1}\}$.

Note that ${\mathcal U}'=\{u_{2},\cdots,u_{n+1}\}=\left({\mathcal U} \backslash J_{k_{1}\cdots k_{j}}\right)\amalg\left(\{u_{n+1}\}\cup J_{k_{1}\cdots k_{j}}\right)$ and $J_{k_{1}\cdots k_{j}}=\{u_{k_{1}},\cdots,u_{k_{j}} ~| ~2\leq k_{j}\leq n\} \subset {\mathcal U}$.
Then the following is obtained:
\begin{align}
\hat{R}_{n}({\mathcal U}')=&\sum_{j=0}^{n-1}\sum_{J_{k_{1}\cdots k_{j}}\subset {\mathcal U}}\hat{R}_{n-j-1}\left({\mathcal U}\backslash J_{k_{1}\cdots k_{j}}\right)R_{j+1}\left(J_{k_{1}\cdots k_{j}}\cup\{u_{n+1}\}\right) .
\end{align}

Similarly, we use the following identities.
(In the following, it appears that formula numbers have been assigned excessively. The purpose is to assign a number to each term on the right side of each formula and to clearly indicate which term is being referred to by the number.)
\begin{align}
\hat{R}_{n+1}(u,{\mathcal U}')=&\sum_{j=0}^{n-1}\sum_{J_{k_{1}\cdots k_{j}}\subset{\mathcal U}}\hat{R}_{n-j}(u,{\mathcal U}\backslash J_{k_{1}\cdots k_{j}})R_{j+1}\left(J_{k_{1}\cdots k_{j}}\cup\{u_{n+1}\}\right)\label{ba}\\
+&\sum_{j=0}^{n-1}\sum_{J_{k_{1}\cdots k_{j}}\subset{\mathcal U}}\hat{R}_{n-j-1}\left({\mathcal U}\backslash J_{k_{1}\cdots k_{j}}\right)R_{j+2}\left(\{u,u_{n+1}\}\cup J_{k_{1}\cdots k_{j}}\right),
\end{align}
\begin{align}
\hat{P}_{n}(u,{\mathcal U}'):=&\left\langle\mathrm{Tr}\left(\left(L'(u,E)-L'(\Phi,E)\right)\frac{1}{u-\Phi}\right)R(u_{2})\cdots R(u_{n+1})\right\rangle\notag\\
=&\sum_{j=0}^{n-1}\sum_{J_{k_{1}\cdots k_{j}}\subset{\mathcal U}}\hat{P}_{n-j-1}(u,{\mathcal U}\backslash J_{k_{1}\cdots k_{j}})R_{j+1}\left(J_{k_{1}\cdots k_{j}}\cup\{u_{n+1}\}\right)\label{bb}\\
&+\sum_{j=0}^{n-1}\sum_{J_{k_{1}\cdots k_{j}}\subset{\mathcal U}}\hat{R}_{n-1-j}\left({\mathcal U}\backslash J_{k_{1}\cdots k_{j}}\right)P_{j+1}(u,J_{k_{1}\cdots k_{j}},u_{n+1}),
\end{align}
\begin{align}
\hat{R}_{n+2}(u,u,{\mathcal U}')=&\sum_{j=0}^{n-1}\sum_{J_{k_{1}\cdots k_{j}}\subset{\mathcal U}}\hat{R}_{n+1-j}\left(u,u,{\mathcal U}\backslash J_{k_{1}\cdots k_{j}}\right)R_{j+1}\left(J_{k_{1}\cdots k_{j}}\cup\{u_{n+1}\}\right)\label{bc}\\
&+2\sum_{j=0}^{n-1}\sum_{J_{k_{1}\cdots k_{j}}\subset{\mathcal U}}\hat{R}_{n-j}\left(u,{\mathcal U}\backslash J_{k_{1}\cdots k_{j}}\right)R_{j+2}\left(J_{k_{1}\cdots k_{j}},u,u_{n+1}\right)\\
&+\sum_{j=0}^{n-1}\sum_{J_{k_{1}\cdots k_{j}}\subset{\mathcal U}}\hat{R}_{n-1-j}\left({\mathcal U}\backslash J_{k_{1}\cdots k_{j}}\right)R_{j+3}\left(J_{k_{1}\cdots k_{j}},u,u,u_{n+1}\right),
\end{align}
\begin{align}
&\sum_{l=2}^{n+1}\frac{\partial}{\partial u_{l}}\frac{\hat{R}_{n}\left(u,{\mathcal U}'\backslash\{u_{l}\}\right)}{u-u_{l}}=\left.\sum_{l=2}^{n+1}\hat{R}_{n}({\mathcal U}')\right|_{u_{l}=u}\frac{\partial}{\partial u_{l}}\frac{1}{u-u_{l}}\notag\\
=&\left.\sum_{l=2}^{n+1}\sum_{j=0}^{n-1}\sum_{J_{k_{1}\cdots k_{j}}\subset{\mathcal U}}\hat{R}_{n-j-1}\left({\mathcal U}\backslash J_{k_{1}\cdots k_{j}}\right)R_{j+1}(u_{n+1},J_{k_{1}\cdots k_{j}})\right|_{u_{l}=u}\frac{\partial}{\partial u_{l}}\frac{1}{u-u_{l}}\notag\\
=&\sum_{j=0}^{n-1}\sum_{J_{k_{1}\cdots k_{j}}\subset{\mathcal U}}\Biggl\{\left.\sum_{u_l \in{\mathcal U}\backslash J_{k_{1}\cdots k_{j}}}\hat{R}_{n-j-1}\left({\mathcal U}\backslash J_{k_{1}\cdots k_{j}}\right)\right|_{u_{l}=u}R_{j+1}\left(u_{n+1},J_{k_{1}\cdots k_{j}}\right)\frac{\partial}{\partial u_{l}}\frac{1}{u-u_{l}}\label{bd}\\
&+\left.\sum_{u_l \in\{u_{n+1}\}\cup J_{k_{1}\cdots k_{j}}}\hat{R}_{n-j-1}\left({\mathcal U}\backslash J_{k_{1}\cdots k_{j}}\right)R_{j+1}(u_{n+1},J_{k_{1}\cdots k_{j}})\right|_{u_{l}=u}\frac{\partial}{\partial u_{l}}\frac{1}{u-u_{l}}\Biggl\} ,
\end{align}
and 
\begin{align}
&\sum_{l=2}^{n+1}\frac{\partial}{\partial u_{l}}\frac{\hat{R}_{n}({\mathcal U}')}{u-u_{l}}=\sum_{l=2}^{n+1}\frac{\partial}{\partial u_{l}}\left(\frac{1}{u-u_{l}}\sum_{j=0}^{n-1}\sum_{J_{k_{1}\cdots k_{j}}\subset{\mathcal U}}\hat{R}_{n-j-1}({\mathcal U}\backslash J_{k_{1}\cdots k_{j}})R_{j+1}\left(J_{k_{1}\cdots k_{j}}\cup\{u_{n+1}\}\right)\right)\notag\\
&=\sum_{j=0}^{n-1}\sum_{J_{k_{1}\cdots k_{j}}\subset{\mathcal U}}\Biggl\{\sum_{u_l \in{\mathcal U}\backslash J_{k_{1}\cdots k_{j}}}\frac{\partial}{\partial u_{l}}\left(\frac{1}{u-u_{l}}\hat{R}_{n-j-1}({\mathcal U}\backslash J_{k_{1}\cdots k_{j}})\right)R_{j+1}\left(J_{k_{1}\cdots k_{j}}\cup\{u_{n+1}\}\right) \notag \\
&+\hat{R}_{n-j-1}\left({\mathcal U}\backslash J_{k_{1}\cdots k_{j}}\right)\sum_{u_l \in\{u_{n+1}\}\cup J_{k_{1}\cdots k_{j}}}\frac{\partial}{\partial u_{l}}\frac{1}{u-u_{l}}R_{j+1}\left(J_{k_{1}\cdots k_{j}}\cup\{u_{n+1}\}\right)\Biggl\} \label{be}
\end{align}
are obtained.

By the assumption,
\begin{align}
N\mathrm{Tr}\left(L'(u,E)\right)\times (\ref{ba})-N \times (\ref{bb})-(\ref{bc})-(\ref{bd})+(\ref{be})=0,
\end{align}
then (\ref{n+1start})  are rewritten as
\begin{align}
&N\Biggl(\mathrm{Tr}\left(L'(u,E)\right)\sum_{j=0}^{n-1}\sum_{J_{k_{1}\cdots k_{j}}\subset{\mathcal U}}\hat{R}_{n-j-1}\left({\mathcal U}\backslash J_{k_{1}\cdots k_{j}}\right)R_{j+2}\left(\{u,u_{n+1}\}\cup J_{k_{1}\cdots k_{j}}\right)\notag\\
-&\sum_{j=0}^{n-1}\sum_{J_{k_{1}\cdots k_{j}}\subset{\mathcal U}}\hat{R}_{n-1-j}\left({\mathcal U}\backslash J_{k_{1}\cdots k_{j}}\right)P_{j+1}(u,J_{k_{1}\cdots k_{j}},u_{n+1})\Biggl)\notag\\
-&2\sum_{j=0}^{n-1}\sum_{J_{k_{1}\cdots k_{j}}\subset{\mathcal U}}\hat{R}_{n-j}\left(u,{\mathcal U}\backslash J_{k_{1}\cdots k_{j}}\right)R_{j+2}\left(J_{k_{1}\cdots k_{j}},u,u_{n+1}\right)\notag\\
-&\sum_{j=0}^{n-1}\sum_{J_{k_{1}\cdots k_{j}}\subset{\mathcal U}}\hat{R}_{n-1-j}\left({\mathcal U}\backslash J_{k_{1}\cdots k_{j}}\right)R_{j+3}\left(J_{k_{1}\cdots k_{j}},u,u,u_{n+1}\right)\notag\\
-&\sum_{j=0}^{n-1}\sum_{J_{k_{1}\cdots k_{j}}\subset{\mathcal U}}\Biggl\{\left.\sum_{u_l \in\{u_{n+1}\}\cup J_{k_{1}\cdots k_{j}}}\hat{R}_{n-j-1}\left({\mathcal U}\backslash J_{k_{1}\cdots k_{j}}\right)R_{j+1}(u_{n+1},J_{k_{1}\cdots k_{j}})\right|_{u_{l}=u}\frac{\partial}{\partial u_{l}}\frac{1}{u-u_{l}}\notag\\
-&\hat{R}_{n-j-1}\left({\mathcal U}\backslash J_{k_{1}\cdots k_{j}}\right)\sum_{u_l \in\{u_{n+1}\}\cup J_{k_{1}\cdots k_{j}}}\frac{\partial}{\partial u_{l}}\frac{1}{u-u_{l}}R_{j+1}\left(J_{k_{1}\cdots k_{j}}\cup\{u_{n+1}\}\right)\Biggl\} =0.
\end{align}
Expanding the third line further into cumulants, we obtain
\begin{align}
0=&\sum_{j=0}^{n-1}\sum_{J_{k_{1}\cdots k_{j}}\subset{\mathcal U}}\hat{R}_{n-j-1}({\mathcal U}\backslash J_{k_{1}\cdots k_{j}})\Biggl[N\left(\mathrm{Tr\left(L'(u,E)\right)}R_{j+2}(u,u_{n+1},J_{k_{1}\cdots k_{j}})-P_{j+1}(u,u_{n+1},J_{k_{1}\cdots k_{j}})\right)\notag\\
&-\Biggl\{\left.\sum_{u_l \in\{u_{n+1}\}\cup J_{k_{1}\cdots k_{j}}}R_{j+1}(u_{n+1},J_{k_{1}\cdots k_{j}})\right|_{u_{l}=u}\frac{\partial}{\partial u_{l}}\frac{1}{u-u_{l}}\notag\\
&-\sum_{u_l \in\{u_{n+1}\}\cup J_{k_{1}\cdots k_{j}}}\frac{\partial}{\partial u_{l}}\frac{1}{u-u_{l}}R_{j+1}\left(J_{k_{1}\cdots k_{j}}\cup\{u_{n+1}\}\right)\Biggl\}-R_{j+3}\left(J_{k_{1}\cdots k_{j}},u,u,u_{n+1}\right)\Biggl]\notag\\
-&\sum_{j=0}^{n-1} \!
\sum_{J_{k_{1}\cdots k_{j}}\subset{\mathcal U}}\Biggl\{2\sum_{i=0}^{n-1-j}
\!\!\!\!
\sum_{J(\vec{l}_{i})\subset\left({\mathcal U}-J(\vec{k}_{j})\right)}
\!\!\!\!\! \!\!\!
\hat{R}_{n-j-i -1}({\mathcal U}-J(\vec{k}_{j})-J(\vec{l}_{i}))
R_{i+1}( {\mathcal U}-J(\vec{l}_{i}) )
R_{j+2}(u,u_{n+1},J_{k_{1}\cdots k_{j}}) \Biggl\},\label{bbb}
\end{align}
where $J(\vec{k}_{j})=J_{k_{1}\cdots k_{j}}$. 
We transform the last line in (\ref{bbb}).
\begin{align}
&-2\sum_{j=0}^{n-1}\sum_{J_{k_{1}\cdots k_{j}}\subset{\mathcal U}}\sum_{i=0}^{n-1-j}
\!\!\!
\sum_{J(\vec{l}_{i})\subset\left({\mathcal U}-J(\vec{k}_{j})\right)}
\!\!\!\!\! \!\!\!
\hat{R}_{n-j-i -1}({\mathcal U}-J(\vec{k}_{j})-J(\vec{l}_{i}))
R_{i+1}( {\mathcal U}-J(\vec{l}_{i}) )
R_{j+2}(u,u_{n+1},J_{k_{1}\cdots k_{j}}) 
\notag\\
=&-2\sum_{j=0}^{n-1}\sum_{i=0}^{n-1-j}\sum_{J(\vec{k}_{j})\subset{\mathcal U}}
\sum_{J(\vec{k}_{j}+\vec{l}_{i})\subset{\mathcal U}}
\!\!\!\!\! \!\!\!
\hat{R}_{n-j-i -1}({\mathcal U}-J(\vec{k}_{j})-J(\vec{l}_{i}))
R_{i+1}( {\mathcal U}-J(\vec{l}_{i}) )
R_{j+2}(u,u_{n+1},J_{k_{1}\cdots k_{j}}) \notag\\
=&-2\sum_{m=0}^{n-1}\sum_{J(\vec{k}_{m})\subset{\mathcal U}}\sum_{i+j=m}\sum_{J(k_{j})\amalg J(\vec{l}_{i})=J(\vec{k}_{m})}
\!\!\!\!\! \!\!\!\!\!\!
\hat{R}_{n-j-i -1}({\mathcal U}-J(\vec{k}_{j})-J(\vec{l}_{i}))
R_{i+1}( {\mathcal U}-J(\vec{l}_{i}) )
R_{j+2}(u,u_{n+1},J_{k_{1}\cdots k_{j}}) \notag\\
=&-2\sum_{m=0}^{n-1}\sum_{J(\vec{k}_{m})\subset{\mathcal U}}\hat{R}_{n-1-m}\left({\mathcal U}\backslash J(\vec{k}_{m})\right)\sum_{i+j=m}\sum_{J(\vec{k}_{i})\amalg J(\vec{k}_{i})=J(\vec{k}_{m})}
\!\!\!\!\! \!\!\!
R_{i+1}(u,J(\vec{k}_{i}))R_{j+2}\left(u,u_{n+1},J(\vec{k}_{j})\right)\notag\\
=&-\sum_{m=0}^{n-1}\sum_{J(\vec{k}_{m})\subset{\mathcal U}}\hat{R}_{n-1-m}\left({\mathcal U}\backslash J(\vec{k}_{m})\right)\sum_{i+j=m+1}\sum_{J(\vec{k}_{i})\amalg  J(\vec{k}_{j})=J(\vec{k}_{m})\cup\{u_{n+1}\}}
\!\!\! \!\!\!
R_{i+1}(u,J(\vec{k}_{i}))R_{j+1}(u,J(\vec{k}_{j})) . \label{c2_deform}
\end{align}
Substituting (\ref{c2_deform}) in the last line in  (\ref{bbb}),
the following is obtained.
\begin{align}
0=&\sum_{j=0}^{n-1}\sum_{J_{k_{1}\cdots k_{j}}\subset{\mathcal U}}\hat{R}_{n-j-1}({\mathcal U}\backslash J_{k_{1}\cdots k_{j}})\Biggl[N\left(\mathrm{Tr\left(L'(u,E)\right)}R_{j+2}(u,u_{n+1},J_{k_{1}\cdots k_{j}})-P_{j+1}(u,u_{n+1},J_{k_{1}\cdots k_{j}})\right)\notag\\
&-\Biggl\{\left.\sum_{u_l \in\{u_{n+1}\}\cup J_{k_{1}\cdots k_{j}}}R_{j+1}(u_{n+1},J_{k_{1}\cdots k_{j}})\right|_{u_{l}=u}\frac{\partial}{\partial u_{l}}\frac{1}{u-u_{l}}\notag\\
&-\sum_{u_l \in\{u_{n+1}\}\cup J_{k_{1}\cdots k_{j}}}\frac{\partial}{\partial u_{l}}\frac{1}{u-u_{l}}R_{j+1}\left(J_{k_{1}\cdots k_{j}}\cup\{u_{n+1}\}\right)\Biggl\}-R_{j+3}\left(J_{k_{1}\cdots k_{j}},u,u,u_{n+1}\right)\Biggl]\notag\\
&-\sum_{m=0}^{n-1}\sum_{J(\vec{k}_{m})\subset{\mathcal U}}\hat{R}_{n-1-m}\left({\mathcal U}\backslash J(\vec{k}_{m})\right)\sum_{i+j=m+1}\sum_{J(\vec{k}_{i})\cup J(\vec{k}_{j})=J(\vec{k}_{m})+\{u_{n+1}\}}
\!\!\! \!\!\!\!\!\! \!\!\!
R_{i+1}(u,J(\vec{k}_{i}))R_{j+1}(u,J(\vec{k}_{j}))
\end{align}

When $J_{k_{1}\cdots k_{j}}\neq{\mathcal U}$, the coefficient of 
$\hat{R}_{n-j-1}({\mathcal U}\backslash J_{k_{1}\cdots k_{j}})$
is 
\begin{align}
&N\left(\mathrm{Tr\left(L'(u,E)\right)}R_{j+2}(u,u_{n+1},J_{k_{1}\cdots k_{j}})-P_{j+1}(u,u_{n+1},J_{k_{1}\cdots k_{j}})\right)\notag\\
&-\Biggl\{\left.\sum_{u_l \in\{u_{n+1}\}\cup J_{k_{1}\cdots k_{j}}}R_{j+1}(u_{n+1},J_{k_{1}\cdots k_{j}})\right|_{u_{l}=u}\frac{\partial}{\partial u_{l}}\frac{1}{u-u_{l}}\notag\\
&-\sum_{u_l \in\{u_{n+1}\}\cup J_{k_{1}\cdots k_{j}}}\frac{\partial}{\partial u_{l}}\frac{1}{u-u_{l}}R_{j+1}\left(J_{k_{1}\cdots k_{j}}\cup\{u_{n+1}\}\right)\Biggl\}-R_{j+3}\left(J_{k_{1}\cdots k_{j}},u,u,u_{n+1}\right)\notag\\
&-\sum_{i+l=j}\sum_{J(\vec{k}_{i})\amalg J(\vec{k}_{l})=J(\vec{k}_{j})\cup\{u_{n+1}\}}R_{l+1}(u,J(\vec{k}_{l}))=0, \label{aaa}
\end{align}
because the assumption for $j+1 \le n-1 $.
From (\ref{aaa}),  the case
$
J_{k_{1}\cdots k_{j}}={\mathcal U}
$
gives still survived term. The expression is 
\begin{align}
0=&N\left(\mathrm{Tr}\left(L'(u,E)\right)R_{n+1}(u,u_{n+1},{\mathcal U})-P_{n}(u,u_{n+1},{\mathcal U})\right)\notag\\
&-R_{n+2}(u,u,u_{n+1},{\mathcal U})-\sum_{J(\vec{k})\cup J(\vec{l})={\mathcal U}\cup\{u_{n+1}\}}R_{|\vec{k}|+1}(u,J(\vec{k}))R_{|\vec{l}|+1}(u,J(\vec{l}))\notag\\
&-\left\{\sum_{u_l \in\{u_{n+1}\}\cup{\mathcal U}}R_{n}(u_{n+1},{\mathcal U})|_{u_{l}=u}\frac{\partial}{\partial u_{l}}\frac{1}{u-u_{l}}-\sum_{u_l \in\{u_{n+1}\}\cup{\mathcal U}}\frac{\partial}{\partial u_{l}}\frac{1}{u-u_{l}}R_{n}({\mathcal U}\cup\{u_{n+1}\})\right\},
\end{align}
where $|\vec{k}|=i$ for $\vec{k} = \{ k_1, \cdots , k_i \}$.
This equation can be written as
\begin{align}
&0=N\left(\mathrm{Tr}\left(L'(u,E)\right)R_{n+1}(u,{\mathcal U}')-P_{n}(u,{\mathcal U}')\right)\notag\\
&-R_{n+2}(u,u,{\mathcal U}')-\sum_{J(\vec{k})\cup J(\vec{l})={\mathcal U}'}R_{|\vec{k}|+1}(u,J(\vec{k}))R_{_{|\vec{l}|+1}}(u,J(\vec{l}))-\sum_{u_l \in{\mathcal U}'}\frac{\partial}{\partial u_{l}}\frac{R_{n}(u,{\mathcal U}'\backslash\{u_{l}\})-R_{n}({\mathcal U}')}{u-u_{l}} .
\end{align}
It was thus proved by mathematical induction.
\end{proof}

The genus expansion of (\ref{loop_cumulant}) is given by
\begin{align}
0=&\left(\mathrm{Tr}\left(L'(u,E)\right)R_{g,n}(u,{\mathcal U})-P_{g,n-1}(u,{\mathcal U})\right)\notag\\
&-R_{g-1,n+1}(u,u,{\mathcal U})-
\!\!\!\!\!
\sum_{J(\vec{k})\cup J(\vec{l})={\mathcal U}, h_{1}+h_{2}=g}
\!\!\!\!\!
R_{h_{1},|\vec{k}|+1}(u,J(\vec{k}))R_{h_{2},{|\vec{l}|+1}}(u,J(\vec{l}))\notag\\
&-\sum_{l=2}^{n}\frac{\partial}{\partial u_{l}}\frac{R_{g,n-1}(u,{\mathcal U}\backslash\{u_{l}\})-R_{g,n-1}({\mathcal U})}{u-u_{l}} . \label{Loop_genus}
\end{align}



\section{Schwinger-Dyson Equation for connected Green's function}
\label{sect5}

\subsection{Connected Green's function }\label{subset5_1}

Using $\displaystyle\log\frac{ \mathcal{Z}[E, J]}{\mathcal{Z}[E, 0]}$, the connected $\displaystyle \sum_{i=1}^{B}N_{i}$-point function $G_{|a_{1}^{1}\ldots a_{N_{1}}^{1}|\ldots|a_{1}^{B}\ldots a_{N_{B}}^{B}|}$ is defined as
\begin{align}
\log\frac{ \mathcal{Z}[E, J]}{\mathcal{Z}[E, 0]}
:=\sum_{B=1}^\infty \sum_{1\leq N_1 \leq \dots \leq
  N_B}^\infty
\sum_{p_1^1,\dots,p^B_{N_B} =0}^{\mathcal{N}} \!\!\!\!
N^{2-B}
&\frac{G_{|p_1^1\dots p_{N_1}^1|\dots|p_1^B\dots p^B_{N_B}|}
}{S_{(N_1,\dots ,N_B)}}
\prod_{\beta=1}^B \frac{\mathbb{J}_{p_1^\beta\dots
    p^\beta_{N_\beta}}}{N_\beta},
\label{logZ}
\end{align}
where $N_{i}$ is the identical valence number for $i=1,\ldots,B$, $\displaystyle\mathbb{J}_{p_1\dots p_{N_{i}}}:=\prod_{j=1}^{N_{i}} J_{p_jp_{j+1}}$ with $N_{i}+1\equiv 1$, $(N_1,\dots,N_B)=(\underbrace{N'_1,\dots,N'_1}_{\nu_1},\dots,
\underbrace{N'_s,\dots,N'_s}_{\nu_s})$, and $\displaystyle S_{(N_1,\dots ,N_B)}=\prod_{\beta=1}^{s} \nu_{\beta}!$. The $\displaystyle \sum_{i=1}^{B}N_{i}$-point function denoted by $G_{|a_{1}^{1}\ldots a_{N_{1}}^{1}|\ldots|a_{1}^{B}\ldots a_{N_{B}}^{B}|}$ is given by the sum over all Feynman diagrams (ribbon graphs) on Riemann surfaces with $B$-boundaries, and each $|a^{i}_{1}\cdots a^{i}_{N_{i}}|$ corresponds to the Feynman diagrams having $N_{i}$-external ribbons from the $i$-th boundary\cite{Grosse:2012uv}. (See Figure \ref{qwe}.)
\begin{figure}[h!]
\begin{center}
\includegraphics[width=120mm]{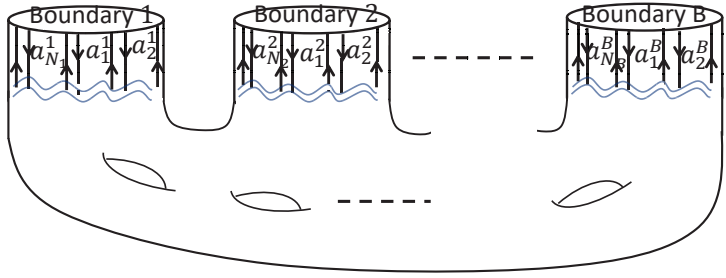}
\caption{The relationship between external ribbons of Feynman diagrams and boundaries as expressed in $G_{|a_{1}^{1}\ldots a_{N_{1}}^{1}|\ldots|a_{1}^{B}\ldots a_{N_{B}}^{B}|}$ .}\label{qwe}
\end{center}
\end{figure}
Some connected $\sum_{i=1}^B N_i$ points function
$\langle\Phi_{a_{1}^{1}a_{2}^{1}}\cdots\Phi_{a_{N_{1}}^{1}a_{{1}}^{1}}\Phi_{a_{1}^{2}a_{2}^{2}}\cdots\Phi_{a_{N_{2}}^{2}a_{{1}}^{2}}\cdots\Phi_{a_{1}^{B}a_{2}^{B}}\cdots\Phi_{a_{N_{B}}^{B}a_{{1}}^{B}}\rangle_{c}$ might include contributions from several types of surfaces classified by their boundaries. 
For example, let us consider $\langle\Phi_{aa}\Phi_{aa}\rangle_{c}$. From (\ref{logZ}), 
$\displaystyle\langle\Phi_{aa}\Phi_{aa}\rangle_{c}=\frac{1}{N}G_{|aa|}+\frac{1}{N^{2}}G_{|a|a|}$. This means that $\langle\Phi_{aa}\Phi_{aa}\rangle_{c}$ includes contributions from two types of surfaces which are surfaces with one boundary and ones with two boundaries. \bigskip

We prepare a connected oriented surface with $B$ boundaries for drawing each Feynman diagram to calculate $G_{|a_{1}^{1}\ldots a_{N_{1}}^{1}|\ldots|a_{1}^{B}\ldots a_{N_{B}}^{B}|}$. We draw a Feynman diagram with external ribbons with $(a_{1}^{i}a_{2}^{i}),\cdots,(a_{N_{i}}^{i}a_{1}^{i})$ subscripted to each boundary $i$. For any connected segments in a Feynman diagram, both ends are on the same boundary. $G_{|a_{1}^{1}\ldots a_{N_{1}}^{1}|\ldots|a_{1}^{B}\ldots a_{N_{B}}^{B}|}$ is given by the sum over all such Feynman diagrams.


Let $F[E, J]:=\log\mathcal{Z}[E, J]$ be the free energy. 
Then the terms up to the fourth order for $J$ are written out explicitly
as follows:
\begin{align}
F[E, J]=&F[E,0]+\sum_{B=1}^{\infty}\sum_{N_{1},\cdots,N_{B}=1}^{\infty}\sum_{p_{1}^{1},\cdots,p_{N_{B}}^{B}=1}^{N}N^{2-B}\frac{G_{|p_{1}^{1}\cdots p_{N_{1}}^{1}|\cdots|p_{1}^{B}\cdots p_{N_{B}}^{B}|}}{B!}\prod_{\beta=1}^{B}\frac{\mathbb{J}_{p_{1}^{\beta}\cdots p_{N_{\beta}}^{\beta}}}{N_{\beta}}\notag\\
=&F[E,0]+\frac{N}{2}\sum_{m,n=1}^{N}G_{|mn|}J_{mn}J_{nm}+\frac{1}{2}\sum_{m,n=1}^{N}G_{|m|n|}J_{mm}J_{nn}\notag\\
&+\frac{1}{24N^{2}}\sum_{m,n,k,r=1}^{N}G_{|m|n|k|r|}J_{mm}J_{nn}J_{kk}J_{rr}\notag\\
&+\frac{1}{4N}\sum_{m,n,k,r=1}^{N}G_{|m|n|kr|}J_{mm}J_{nn}J_{kr}J_{rk}+\frac{1}{8}\sum_{m,n,k,r=1}^{N}G_{|mn|kr|}J_{mn}J_{nm}J_{kr}J_{rk}\notag\\
&+\frac{1}{3}\sum_{m,n,k,r=1}^{N}G_{|m|nkr|}J_{mm}J_{nk}J_{kr}J_{rn}+\frac{N}{4}\sum_{m,n,k,r=1}^{N}G_{|mnkr|}J_{mn}J_{nk}J_{kr}J_{rm}+\mathcal{O}(J^{6}).
\end{align}
We have taken into account that a correlation function 
$G_{|a_{1}^{1}\ldots a_{N_{1}}^{1}|\ldots|a_{1}^{B}\ldots a_{N_{B}}^{B}|}$
in the $\Phi^{4}$ matrix model
is identically zero if $\sum_{i=1}^{B}N_{i}$ is odd.

The Feynman rules are given as follows.
 the propagator for the free Lagrangian is given by
\begin{align}
\langle \Phi_{ab}\Phi_{cd} \rangle_f 
=  \frac{1}{N} \frac{\delta_{ad} \delta_{bc}}{E_a + E_b}
=:  \lower1.3ex\hbox{\propagator} ,
\end{align}
the interaction is given as 
\begin{align}
 \lower1.9ex\hbox{\vertexs}:= -\frac{\eta}{4} \times 
 \mbox{symmetry factor},
\end{align}
and each loop corresponds to a sum $\sum_n$.


\subsection{Formula for connected Green's function }

We will rewrite the Schwinger-Dyson equation (\ref{SDeq_FreeEnergy}) using a connected Green's function.
First, the first-order derivative terms are calculated.
Using 
\begin{align}
\frac{\partial}{\partial E_{i}}F[E, J]=&
-\frac{1}{N\mathcal{Z}[E, J]}
\sum_{k=1}^{N}\frac{\partial^{2}}{\partial J_{ik}\partial J_{ki}} \mathcal{Z}[E, J] ,
\end{align}
we find
\begin{align}
\frac{\partial}{\partial E_{i}}F[E,0]=
&\left.-\frac{1}{N\mathcal{Z}[E,0]}\left\{\sum_{k=1}^{N}\left(\frac{\partial^{2}}{\partial J_{ik}\partial J_{ki}}F[E,J]\right)e^{F[E,J]}\right\}\right|_{J=0}\notag\\
=&-\sum_{k=1}^{N}G_{|ik|}-\frac{1}{N}G_{|i|i|}. \label{z}
\end{align}
Using (\ref{z}), we obtain
\begin{align}
\frac{\eta}{N}\sum_{i=1}^{N}\left(\frac{\partial}{\partial E_{i}}F[E,0]\right)\left(\frac{\partial}{\partial E_{i}}F[E,0]\right)
=&\frac{\eta}{N}\sum_{i,k,l=1}^{N}G_{|ik|}G_{|il|}+\frac{2\eta}{N^{2}}\sum_{i,k=1}^{N}G_{|ik|}G_{|i|i|}+\frac{\eta}{N^{3}}\sum_{i=1}^{N}G_{|i|i|}^{2} ,
\end{align}
and
%
\begin{align}
&\frac{\eta}{N}\sum_{i,j=1,i\neq j}^{N}\frac{1}{E_{i}-E_{j}}\left(\frac{\partial}{\partial E_{i}}F[E,0] 
-\frac{\partial}{\partial E_{i}}F[E,0] 
\right)\notag \\
&=-\frac{\eta}{N}\sum_{i,j,k=1,i\neq j}^{N}\frac{1}{E_{i}-E_{j}}(G_{|ik|}
- G_{|jk|})
-\frac{\eta}{N^{2}}\sum_{i,j=1,i\neq j}^{N}\frac{1}{E_{i}-E_{j}}(G_{|i|i|} - G_{|j|j|}) .
\end{align}

Next let us transform the Laplacian term 
$\displaystyle \frac{\eta}{N}\sum_{i=1}^{N}\left(\frac{\partial^{2}}{\partial E_{i}^{2}}F[E,J]\right)$
to express it in terms of a connected Green's function. Note that
\begin{align}
&\frac{\eta}{N}\sum_{i=1}^{N}\left(\frac{\partial^{2}}{\partial E_{i}^{2}}F [E,J] \right) \\
&=\frac{\eta}{N}\sum_{i=1}^{N}\left\{\frac{\frac{1}{N^{2}}
{\sum_{k,l=1}^{N}\frac{\partial^{2}}{\partial J_{ik}\partial J_{ki}}\frac{\partial^{2}}{\partial J_{il}\partial J_{li}}\mathcal{Z}[E,J]}}{\mathcal{Z} [E,J] }-\left(\frac{-\frac{1}{N}{\sum_{k=1}^{N}\frac{\partial^{2}}{\partial J_{ik}\partial J_{ki}}\mathcal{Z}[E,J]}}{\mathcal{Z} [E,J] }\right)^{2}\right\} , \notag
\end{align}
and one point functions vanish.
%
The second term with $J=0$  is given as
\begin{align}
&\left. -\frac{\eta}{N}\sum_{i=1}^{N}\left\{\left(\frac{-\frac{1}{N}{\sum_{k=1}^{N}\frac{\partial^{2}}{\partial J_{ik}\partial J_{ki}}\mathcal{Z}[E,J]}}{\mathcal{Z} [E,J] }\right)^{2}\right\} \right|_{J=0}
=\left.-\frac{\eta}{N^{3}(\mathcal{Z}[E, 0])^{2}}\sum_{i=1}^{N}\left\{\sum_{k=1}^{N}\left(\frac{\partial^{2}F[E,J]}{\partial J_{ik}\partial J_{ki}}\right)e^{F[E,J]}\right\}^{2}\right|_{J=0}\notag\\
&=-\frac{\eta}{N}\sum_{i,l,t=1}^{N}G_{|il|}G_{|it|}-\frac{2\eta}{N^{2}}\sum_{i,l=1}^{N}G_{|il|}G_{|i|i|}-\frac{\eta}{N^{3}}\sum_{i=1}^{N}G_{|i|i|}^{2} .
\end{align}
The first term with $J=0$  is complex:
\begin{align}
&\left. {\sum_{k,l=1}^{N}\frac{\partial^{2}}{\partial J_{ik}\partial J_{ki}}\frac{\partial^{2}}{\partial J_{il}\partial J_{li}}\mathcal{Z}[E,J]} \right|_{J=0}\notag\\
=&\sum_{k,l=1}^{N}\Biggl\{ {\left(\frac{\partial^{4}}{\partial J_{il}\partial J_{li}\partial J_{ik}\partial J_{ki}}F[E,J]\right)} e^{F[E,J]}
+{\left(\frac{\partial^{2}}{\partial J_{ik}\partial J_{ki}}F[E,J]\right)\left(\frac{\partial^{2}}{\partial J_{il}\partial J_{li}}F[E,J]\right)} e^{F[E,J]}\notag\\
&+{\left(\frac{\partial^{2}}{\partial J_{li}\partial J_{ki}}F[E,J]\right)\left(\frac{\partial^{2}}{\partial J_{il}\partial J_{ik}}F[E,J]\right)} e^{F[E,J]}\notag\\
&\left.+{\left(\frac{\partial^{2}}{\partial J_{il}\partial J_{ki}}F[E,J]\right)\left(\frac{\partial^{2}}{\partial J_{li}\partial J_{ik}}F[E,J]\right)} e^{F[E,J]}\Biggl\}\right|_{J=0} . \label{ZJJJJ}
\end{align}

%
For example, the first term of (\ref{ZJJJJ}) is obtained as
\begin{align}
&\left.\frac{\eta}{N^{3}}\sum_{i,l,m=1}^{N}\frac{\partial^{4}F[E,J]}{\partial J_{im}\partial J_{mi}\partial J_{il}\partial J_{li}} \right|_{J=0}\notag\\
=&\frac{\eta}{N^{5}}\sum_{i=1}^{N}G_{|i|i|i|i|}+\frac{2\eta}{N^{4}}\sum_{i,l=1}^{N}G_{|i|i|il|}+\frac{4\eta}{N^{4}}\sum_{i=1}^{N}G_{|i|i|ii|}+\frac{\eta}{N^{3}}\sum_{i,m,l=1}^{N}G_{|im|il|}+\frac{\eta}{N^{3}}\sum_{i=1}^{N}G_{|ii|ii|}\notag\\
&+\frac{\eta}{N^{3}}\sum_{l,i=1}^{N}G_{|il|il|}+\frac{4\eta}{N^{3}}\sum_{i=1}^{N}G_{|i|iii|}+\frac{4\eta}{N^{3}}\sum_{i,l=1}^{N}G_{|i|iil|}+\frac{2\eta}{N^{2}}\sum_{i=1}^{N}G_{|iiii|}+\frac{2\eta}{N^{2}}\sum_{i,l=1}^{N}G_{|ilii|}\notag\\
&+\frac{\eta}{N^{2}}\sum_{l,i=1}^{N}G_{|ilil|}+\frac{\eta}{N^{2}}\sum_{i,m,l=1}^{N}G_{|limi|} .\label{x}
\end{align}

After similar calculations, $\displaystyle\frac{\eta}{N}\sum_{i=1}^{N}\left(\frac{\partial^{2}}{\partial E_{i}^{2}}F[E,0]\right)$ is obtained as follows:
\begin{align}
%
&
\frac{\eta}{N^{5}}\sum_{i=1}^{N}G_{|i|i|i|i|}+\frac{2\eta}{N^{4}}\sum_{i,l=1}^{N}G_{|i|i|il|}+\frac{4\eta}{N^{4}}\sum_{i=1}^{N}G_{|i|i|ii|}+\frac{\eta}{N^{3}}\sum_{i,m,l=1}^{N}G_{|im|il|}\notag\\
&+\frac{\eta}{N^{3}}\sum_{i=1}^{N}G_{|ii|ii|}+\frac{\eta}{N^{3}}\sum_{i,l=1}^{N}G_{|il|il|}+\frac{4\eta}{N^{3}}\sum_{i=1}^{N}G_{|i|iii|}+\frac{4\eta}{N^{3}}\sum_{i,l=1}^{N}G_{|i|iil|}\notag\\
&+\frac{2\eta}{N^{2}}\sum_{i=1}^{N}G_{|iiii|}+\frac{2\eta}{N^{2}}\sum_{i,l=1}^{N}G_{|ilii|}+\frac{\eta}{N^{2}}\sum_{i,l=1}^{N}G_{|ilil|}+\frac{\eta}{N^{2}}\sum_{i,m,l=1}^{N}G_{|limi|}\notag\\
&+\frac{\eta}{N}\sum_{i=1}^{N}G_{|ii|}^{2}+\frac{4\eta}{N^{2}}\sum_{i=1}^{N}G_{|ii|}G_{|i|i|}+\frac{2\eta}{N^{3}}\sum_{i=1}^{N}G_{|i|i|}^{2}+\frac{\eta}{N}\sum_{i,m=1}^{N}G_{|im|}^{2} .
\label{FEE}
\end{align}

Summarizing the results from (\ref{z}) to (\ref{FEE}),
the equivalent equation with $\mathcal{L}_{SD}\mathcal{Z}[E,0]=\mathcal{L}_{SD}e^{F[E,0]}=0$
in the form of connected Green function is obtained.
\begin{proposition} \label{SD_eq_Green}
The connected $4$-point functions and the connected $2$-point
functions defined in (\ref{logZ})
satisfy the following relation.
\begin{align}
&\frac{\eta}{N^{5}}\sum_{i=1}^{N}G_{|i|i|i|i|}+\frac{2\eta}{N^{4}}\sum_{i,l=1}^{N}G_{|i|i|il|}+\frac{4\eta}{N^{4}}\sum_{i=1}^{N}G_{|i|i|ii|}+\frac{\eta}{N^{3}}\sum_{i,m,l=1}^{N}G_{|im|il|}+\frac{\eta}{N^{3}}\sum_{i=1}^{N}G_{|ii|ii|}\notag\\
&+\frac{\eta}{N^{3}}\sum_{i,l=1}^{N}G_{|il|il|}+\frac{4\eta}{N^{3}}\sum_{i=1}^{N}G_{|i|iii|}+\frac{4\eta}{N^{3}}\sum_{i,l=1}^{N}G_{|i|iil|}+\frac{2\eta}{N^{2}}\sum_{i=1}^{N}G_{|iiii|}+\frac{2\eta}{N^{2}}\sum_{i,l=1}^{N}G_{|ilii|}\notag\\
&+\frac{\eta}{N^{2}}\sum_{i,l=1}^{N}G_{|ilil|}+\frac{\eta}{N^{2}}\sum_{i,m,l=1}^{N}G_{|limi|}\notag\\
&+\frac{\eta}{N}\sum_{i,l,m=1}^{N}G_{|il|}G_{|im|}+\frac{2\eta}{N^{2}}\sum_{i,l=1}^{N}G_{|il|}G_{|i|i|}+\frac{3\eta}{N^{3}}\sum_{i=1}^{N}G_{|i|i|}^{2}+\frac{\eta}{N}\sum_{i=1}^{N}G_{|ii|}^{2}+\frac{4\eta}{N^{2}}\sum_{i=1}^{N}G_{|ii|}G_{|i|i|}\notag\\
&+\frac{\eta}{N}\sum_{i,k=1}^{N}G_{|ik|}^{2}+\frac{\eta}{N}\sum_{i,j,k=1,i\neq j}^{N}\frac{1}{E_{i}-E_{j}}\left(G_{|jk|}-G_{|ik|}\right)+\frac{\eta}{N^{2}}\sum_{i,j=1,i\neq j}^{N}\frac{1}{E_{i}-E_{j}}(G_{|j|j|}-G_{|i|i|})\notag\\
&+2\sum_{i,k=1}^{N}E_{i}G_{|ik|}+\frac{2}{N}\sum_{i=1}^{N}E_{i}G_{|i|i|}-N^{2}=0 . \label{u}
\end{align}

\end{proposition}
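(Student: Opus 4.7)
The goal is to rewrite each term in equation (\ref{SDeq_FreeEnergy}) in terms of the connected Green's functions $G_{|p_1^1 \cdots p_{N_1}^1|\cdots|p_1^B \cdots p_{N_B}^B|}$ defined through the series expansion (\ref{logZ}). The key tool is the identity
\[
\frac{\partial F[E,J]}{\partial E_i} = -\frac{1}{N\mathcal{Z}[E,J]}\sum_{k=1}^N \frac{\partial^2 \mathcal{Z}[E,J]}{\partial J_{ik}\partial J_{ki}},
\]
which follows from differentiating the action $S$ with respect to $E_i$ (bringing down $N\sum_k \Phi_{ik}\Phi_{ki}$) and comparing with the action of $\partial^2/\partial J_{ik}\partial J_{ki}$ on the source term $N\mathrm{tr}(J\Phi)$ (bringing down $N^2 \Phi_{ki}\Phi_{ik}$). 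This identity lets us convert every $E_i$-derivative of $F$ into a $J$-derivative expression, which can then be evaluated at $J=0$ by reading off coefficients from (\ref{logZ}).

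For the first-derivative terms, the identity above gives (\ref{z}) directly upon identifying the $G_{|ik|}$ and $G_{|i|i|}$ contributions. Squaring this result produces the $\frac{\eta}{N}\sum_i(\partial F/\partial E_i)^2$ piece, and taking differences for $i$ and $j$ yields the Vandermonde-weighted term $\frac{\eta}{N}\sum_{i\neq j}\frac{1}{E_i - E_j}(\partial_i F - \partial_j F)$. The kinetic piece $-2\sum_i E_i\,\partial F/\partial E_i$ and the constant $-N^{2}$ are handled in the same way. These contributions comprise the bottom lines of (\ref{u}).

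The real work is the Laplacian piece $\frac{\eta}{N}\sum_i \partial^2 F/\partial E_i^2$. Applying the identity twice gives
\[
\frac{\partial^2 F}{\partial E_i^2} = \frac{1}{N^2\mathcal{Z}}\sum_{k,l}\frac{\partial^4\mathcal{Z}}{\partial J_{ik}\partial J_{ki}\partial J_{il}\partial J_{li}} - \left(\frac{1}{N\mathcal{Z}}\sum_k \frac{\partial^2\mathcal{Z}}{\partial J_{ik}\partial J_{ki}}\right)^2.
\]
The subtracted piece evaluates immediately to products of two-point Green's functions and cancels several pieces arising elsewhere. For the first piece, writing $\mathcal{Z}=e^F$ and applying Leibniz yields the four types of contributions displayed in (\ref{ZJJJJ}): one genuine four-$J$-derivative of $F$ and three distinct pairings of two-$J$-derivatives. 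Evaluating the four-derivative term $\partial^4 F/\partial J_{im}\partial J_{mi}\partial J_{il}\partial J_{li}$ at $J=0$ produces the twelve boundary-labeled Green's functions enumerated in (\ref{x}); each pairing contribution likewise expands into a small collection of $G$-products.

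Assembling all pieces into (\ref{SDeq_FreeEnergy}) and observing the cancellation between the $-(\partial F/\partial E_i)^2$ from the Laplacian and the term $+\frac{\eta}{N}\sum_i(\partial F/\partial E_i)^2$ (which leaves only the $\frac{3\eta}{N^3}\sum_i G_{|i|i|}^2$ contribution in the penultimate line of (\ref{u}), for instance) reduces the whole identity to (\ref{u}). The main obstacle is purely combinatorial: correctly matching each $J$-derivative configuration to the boundary-labeled Green's function $G_{|\cdots|\cdots|}$ with the proper multiplicity, tracking the normalization factors $N^{2-B}/\prod_\beta N_\beta$ and the symmetry factor $S_{(N_1,\dots,N_B)}$ of (\ref{logZ}), and accounting for the permutation symmetries among dummy indices $k,l,m$. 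Once this bookkeeping is executed consistently, the identification of (\ref{SDeq_FreeEnergy}) with (\ref{u}) is immediate.
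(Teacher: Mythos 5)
Your proposal follows essentially the same route as the paper: the identity $\partial_{E_i}F=-\tfrac{1}{N\mathcal{Z}}\sum_k\partial_{J_{ik}}\partial_{J_{ki}}\mathcal{Z}$, the resulting expression (\ref{z}) for the first derivatives, the splitting of the Laplacian into the fourth $J$-derivative of $\mathcal{Z}$ minus the square of the second, the Leibniz expansion into the four contributions of (\ref{ZJJJJ}), and the final bookkeeping against the expansion (\ref{logZ}). This matches the paper's proof, so no further comparison is needed.
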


\subsection{Perturbative Check for Schwinger-Dyson Equation}
Let us perturbatively check Proposition \ref{SD_eq_Green}
up to the first order of $\eta$ in this subsection.\\

First we calculate $G_{|ik|}$.
\begin{align}
G_{|ik|}=&\frac{1}{N}\left.\frac{\partial^{2}}{\partial J_{ik}\partial J_{ki}}\log\frac{\mathcal{Z}[E,J]}{\mathcal{Z}[E,0]}\right|_{J=0}
=\left.\frac{1}{N\mathcal{Z}[E,0]}\frac{\partial^{2}\mathcal{Z}[E,J]}{\partial J_{ik}\partial J_{ki}}\right|_{J=0}\notag\\
=&\frac{1}{N\mathcal{Z}[E,0]}\frac{\partial^{2}}{\partial J_{ik}\partial J_{ki}}\Biggl\{\sum_{t=0}^{\infty}\sum_{l=0}^{\infty}\frac{1}{t!}\frac{1}{l!}\int\mathcal{D}\Phi\left(-\frac{N\eta}{4}\right)^{t}\left(\sum_{n_{1},n_{2},n_{3},n_{4}=1}^{N}\Phi_{n_{1}n_{2}}\Phi_{n_{2}n_{3}}\Phi_{n_{3}n_{4}}\Phi_{n_{4}n_{1}}\right)^{t}\notag\\
&\left.\times N^{l}\left(\sum_{m_{1},m_{2}=1}^{N}J_{m_{1}m_{2}}\Phi_{m_{2}m_{1}}\right)^{l}\exp\left(-N\mathrm{tr}\left(E\Phi^{2}\right)\right)\Biggl\}\right|_{J=0}\notag\\
=&\frac{1}{E_{k}+E_{i}}-\frac{\eta}{N}\sum_{n_{3}=1}^{N}\frac{1}{(E_{k}+E_{i})^{2}(E_{n_{3}}+E_{i})}-\frac{\eta}{N}\sum_{n_{3}=1}^{N}\frac{1}{(E_{k}+E_{i})^{2}(E_{n_{3}}+E_{k})}+\mathcal{O}(\eta^{2})\label{v}
\end{align}

Second we calculate $G_{|i|k|}$.
\begin{align}
G_{|i|k|}=&\left.\frac{\partial^{2}}{\partial J_{ii}\partial J_{kk}}\log\frac{\mathcal{Z}[E,J]}{\mathcal{Z}[E,0]}\right|_{J=0}
=\left.\frac{1}{\mathcal{Z}[E,0]}\frac{\partial^{2}\mathcal{Z}[E,J]}{\partial J_{ii}\partial J_{kk}}\right|_{J=0}\notag\\
=&\frac{1}{\mathcal{Z}[E,0]}\frac{\partial^{2}}{\partial J_{ii}\partial J_{kk}}\Biggl\{\sum_{t=0}^{\infty}\sum_{l=0}^{\infty}\frac{1}{t!}\frac{1}{l!}\int\mathcal{D}\Phi\left(-\frac{N\eta}{4}\right)^{t}\left(\sum_{n_{1},n_{2},n_{3},n_{4}=1}^{N}\Phi_{n_{1}n_{2}}\Phi_{n_{2}n_{3}}\Phi_{n_{3}n_{4}}\Phi_{n_{4}n_{1}}\right)^{t}\notag\\
&\left.\times N^{l}\left(\sum_{m_{1},m_{2}=1}^{N}J_{m_{1}m_{2}}\Phi_{m_{2}m_{1}}\right)^{l}\exp\left(-N\mathrm{tr}\left(E\Phi^{2}\right)\right)\Biggl\}\right|_{J=0}\notag\\
=&-\frac{\eta}{4E_{i}E_{k}(E_{k}+E_{i})}+\mathcal{O}(\eta^{2})\label{w}
\end{align}

Note that from the Feynmann rule given in the end of Subsection
\ref{subset5_1}, the following terms in Proposition \ref{SD_eq_Green}
is $\mathcal{O}(\eta^{2})$:
\begin{align*}
\mathcal{O}(\eta^{2})=&\frac{\eta}{N^{5}}\sum_{i=1}^{N}G_{|i|i|i|i|}+\frac{2\eta}{N^{4}}\sum_{i,l=1}^{N}G_{|i|i|il|}+\frac{4\eta}{N^{4}}\sum_{i=1}^{N}G_{|i|i|ii|}+\frac{\eta}{N^{3}}\sum_{i,m,l=1}^{N}G_{|im|il|}+\frac{\eta}{N^{3}}\sum_{i=1}^{N}G_{|ii|ii|}\notag\\
&+\frac{\eta}{N^{3}}\sum_{i,l=1}^{N}G_{|il|il|}+\frac{4\eta}{N^{3}}\sum_{i=1}^{N}G_{|i|iii|}+\frac{4\eta}{N^{3}}\sum_{i,l=1}^{N}G_{|i|iil|}+\frac{2\eta}{N^{2}}\sum_{i=1}^{N}G_{|iiii|}+\frac{2\eta}{N^{2}}\sum_{i,l=1}^{N}G_{|ilii|}\notag\\
&+\frac{\eta}{N^{2}}\sum_{i,l=1}^{N}G_{|ilil|}+\frac{\eta}{N^{2}}\sum_{i,m,l=1}^{N}G_{|limi|}+\frac{\eta}{N^{2}}\sum_{i,j=1,i\neq j}^{N}\frac{1}{E_{i}-E_{j}}(G_{|j|j|}-G_{|i|i|})\notag\\
&+\frac{4\eta}{N^{2}}\sum_{i=1}^{N}G_{|ii|}G_{|i|i|}+\frac{2\eta}{N^{2}}\sum_{i,l=1}^{N}G_{|il|}G_{|i|i|}+\frac{3\eta}{N^{3}}\sum_{i=1}^{N}G_{|i|i|}^{2} .
\end{align*}

From (\ref{w}),(\ref{v}), we can calculate perturbatively (\ref{u}) 
as follows
\begin{align*}
&L.H.S. of  (\ref{u}) \notag \\
=&\Biggl\{-N^{2}+\frac{2}{N}\sum_{i=1}^{N}E_{i}G_{|i|i|}+2\sum_{i,k=1}^{N}E_{i}G_{|ik|}+\frac{\eta}{N}\sum_{i,j,k=1,i\neq j}^{N}\frac{1}{E_{i}-E_{j}}\left(G_{|jk|}-G_{|ik|}\right)\notag\\
&+\frac{\eta}{N}\sum_{i=1}^{N}G_{|ii|}^{2}
+\frac{\eta}{N}\sum_{i,l,m=1}^{N}G_{|il|}G_{|im|}+\frac{\eta}{N}\sum_{i,k=1}^{N}G_{|ik|}^{2}\Biggl\} +\mathcal{O}(\eta^{2}) \notag\\
=&-N^{2}-\frac{\eta}{4N}\sum_{i=1}^{N}\frac{1}{E_{i}^{2}}+2\sum_{i,k=1}^{N}E_{i}\frac{1}{E_{k}+E_{i}}-\frac{2\eta}{N}\sum_{i,k,n_{3}=1}^{N}\frac{E_{i}}{(E_{k}+E_{i})^{2}(E_{n_{3}}+E_{i})}\notag\\
&-\frac{2\eta}{N}\sum_{i,k,n_{3}=1}^{N}\frac{E_{i}}{(E_{k}+E_{i})^{2}(E_{n_{3}}+E_{k})}+\frac{\eta}{N}\sum_{i,k,j=1,i\neq j}^{N}\frac{1}{(E_{j}+E_{k})(E_{i}+E_{k})}+\frac{\eta}{4N}\sum_{i=1}^{N}\frac{1}{E_{i}^{2}}\notag\\
&+\frac{\eta}{N}\sum_{i,l,m=1}^{N}\frac{1}{(E_{i}+E_{l})(E_{i}+E_{m})}+\frac{\eta}{N}\sum_{i,k=1}^{N}\frac{1}{(E_{i}+E_{k})^{2}}+\mathcal{O}(\eta^{2})\notag\\
=& 0+\mathcal{O}(\eta^{2}) .
\end{align*}
Thus, it was also confirmed that there is no inconsistency in the perturbation calculation for $\eta$
up to order $\eta^2$.

\section{Summary}
It was recently discovered that the partition function of the $\Phi^4$ matrix model with a Kontsevich-type kinetic term satisfies the Schr\"odinger equation of the $N$-body harmonic oscillator, and that eigenstates of the Virasoro operators can be derived from this partition function. 
In this paper, we build upon these findings and obtain an explicit formula for such eigenstates in terms of the free energy, as demonstrated in Section \ref{sect3}. Furthermore, since the free energy serves as the generating function for connected multi-point correlation functions, the differential equation for the harmonic oscillator can also be reformulated in terms of these connected correlators. The corresponding equations for the connected two- and four-point functions are derived in Section 5. These results are further confirmed perturbatively up to first order in the coupling constant of the interaction.

The process of obtaining the Schr\"odinger equation for the $N$-body harmonic oscillator is constructed from a set of Schwinger-Dyson equations.
The contribution of additional Schwinger-Dyson equations is often discussed using loop equations in matrix models with $U(N)$ symmetry.
Although this model lacks $U(N)$ symmetry due to the presence of a kinetic term, it retains $U(1)^N$ symmetry, enabling us to derive equations similar to loop equations, as described in \cite{Eynard:2015aea}. This is done in Section \ref{sect4}.\\


%
\section*{Acknowledgements}
\noindent 
A.S.\ was supported by JSPS KAKENHI Grant Number 21K03258.
R.W. was supported by the Deutsche Forschungsgemeinschaft (DFG, German
Research Foundation) --  Project-ID 427320536 -- SFB 1442, as well as
under Germany's Excellence Strategy EXC 2044 -- 390685587,
Mathematics M\"unster:  Dynamics -- Geometry -- Structure.
We would like to thank the Erwin Schr\"odinger International Institute for Mathematics and Physics (ESI) for supporting the realization of this collaborative research.
\\


\noindent
{\bf Data availability} \   Data sharing is not applicable to this article as no new data were 
created or analyzed in this study.

\section*{Declarations}

{\bf Conflicts of interest} \ On behalf of all authors, the corresponding author states that there is no conflict of
interest.

\appendix

\section{Notations}
For the reader's convenience, we provide a list of notations in this appendix. 
The two coordinate systems, 
$E_i$ and $y_i$, are related by $\displaystyle y_{i}=\sqrt{\frac{N}{\eta}}E_{i}$.
\begin{itemize}
\item$\displaystyle\mathcal{L}_{SD}:=\frac{\eta}{N}\sum_{i=1}^{N}\left(\frac{\partial}{\partial E_{i}}\right)^{2}\!\!\!+\frac{\eta}{N}\!\!\!\sum_{i,j=1,i\neq j}^{N}\frac{1}{E_{i}-E_{j}}\left(\frac{\partial}{\partial E_{i}}-\frac{\partial}{\partial E_{j}}\right)-2\sum_{k=1}^{N}E_{k}\frac{\partial}{\partial E_{k}}-N^{2}$
\item$\mathcal{L}_{SD}=\sum_{i=1}^{N}\left(\frac{\partial}{\partial y_{i}}\right)^{2}+\sum_{i,j=1,i\neq j}^{N}\frac{1}{y_{i}-y_{j}}\left(\frac{\partial}{\partial y_{i}}-\frac{\partial}{\partial y_{j}}\right)-2\sum_{i=1}^{N}y_{i}\frac{\partial}{\partial y_{i}}-N^{2}$
\vspace{2mm}
\item$F[E,0]=\log\mathcal{Z}[E,0]$\hspace{2mm}: Free energy
\vspace{2mm}
\item$\hat{\mathcal{L}}_{SD}=e^{-F[E,0]}\mathcal{L}_{SD}e^{F[E,0]}$
\vspace{2mm}
\item$a_{i}=\frac{1}{\sqrt{2}}\left(y_{i}+\frac{\partial}{\partial y_{i}}\right)=\frac{1}{\sqrt{2}}e^{-\frac{1}{2}\sum_{k=1}^{N}y_{k}^{2}}\frac{\partial}{\partial y_{i}}e^{\frac{1}{2}\sum_{k=1}^{N}y_{k}^{2}}$
\vspace{2mm}
\item$a_{i}^{\dagger}=\frac{1}{\sqrt{2}}\left(y_{i}-\frac{\partial}{\partial y_{i}}\right)=-\frac{1}{\sqrt{2}}e^{\frac{1}{2}\sum_{k=1}^{N}y_{k}^{2}}\frac{\partial}{\partial y_{i}}e^{-\frac{1}{2}\sum_{k=1}^{N}y_{k}^{2}}$
\vspace{2mm}
\item$L_{-m}:=\sum_{i=1}^{N}\left(\alpha(a_{i}^{\dagger})^{m+1}a_{i}+(1-\alpha)a_{i}(a_{i}^{\dagger})^{m+1}\right)$\\
$=\sum_{i=1}^{N}\biggl\{a_{i}(a_{i}^{\dagger})^{m+1}-\alpha(m+1)(a_{i}^{\dagger})^{m}\biggl\}$
\vspace{2mm}
\item$\Delta(E)=\prod_{1\leq i<j\leq N}(E_{j}-E_{i})$
\vspace{2mm}
\item$\Delta(y)=\prod_{1\leq i<j\leq N}(y_{j}-y_{i})=\exp\left(\log\prod_{1\leq i<j\leq N}(y_{j}-y_{i})\right)=\exp\left(\sum_{1\leq i<j\leq N}\log(y_{j}-y_{i})\right)$
\vspace{2mm}
\item$g:=e^{\frac{N}{2\eta}\sum_{i=1}^{N}E_{i}^{2}}\Delta^{-1}(E)=\left(\frac{N}{\eta}\right)^{\frac{N(N-1)}{4}}e^{\frac{1}{2}\sum_{i=1}^{N}y_{i}^{2}}\Delta^{-1}(y)$
\vspace{2mm}
\item$\widetilde{L}_{-m}:=gL_{-m}g^{-1}$
\vspace{2mm}
\item$\hat{L}_{-m}=e^{-F}\widetilde{L}_{-m}e^{F}$
\vspace{2mm}
\item$e^{-\frac{N}{2\eta}\sum_{i=1}^{N}E_{i}^{2}}\Delta(E)\mathcal{L}_{SD}\Delta^{-1}(E)e^{\frac{N}{2\eta}\sum_{i=1}^{N}E_{i}^{2}}=\frac{\eta}{N}\sum_{i=1}^{N}\left(\frac{\partial}{\partial E_{i}}\right)^{2}-\frac{N}{\eta}\sum_{i=1}^{N}(E_{i})^{2}=-\mathcal{H}_{HO}$
\end{itemize}

\end{document}